\documentclass[10pt,twocolumn,twoside]{IEEEtran}
\usepackage{blindtext}
\usepackage{graphicx}
\usepackage{subfigure}
\usepackage{amsmath,amssymb}
\usepackage[noadjust]{cite} 
\usepackage{epstopdf}
\usepackage{algorithm}
\usepackage{algorithmic}
\usepackage{enumerate,mdwlist}
\usepackage{xcolor}
\pdfminorversion=4

\newtheorem{corollary}{Corollary}
\newtheorem{assumption}{Assumption}
\newtheorem{remark}{Remark}
\newtheorem{definition}{Definition}
\newtheorem{lemma}{Lemma}
\newtheorem{theorem}{Theorem}
\newtheorem{example}{Example}

\newenvironment{proof}[1][Proof]{\noindent\textbf{#1.} }{\hfill \rule{0.5em}{0.5em}}
\hyphenation{op-tical net-works semi-conduc-tor}

\begin{document}
%
\title{A Directed Spanning Tree Adaptive Control Framework for Time-Varying Formations
\thanks{\textcolor[rgb]{1.00,0.00,0.00}{The revised version of this work has been accepted by IEEE Transactions on Control of Network Systems, doi: 10.1109/TCNS.2021.3050332.}}
}

\author{Dongdong Yue,~\IEEEmembership{}
        Simone Baldi,~\IEEEmembership{Senior Member,~IEEE,}
        Jinde Cao,~\IEEEmembership{Fellow,~IEEE,} \\
        Qi Li,~\IEEEmembership{} and
        Bart De Schutter,~\IEEEmembership{Fellow,~IEEE}
\thanks{D. Yue and Q. Li are with School of Automation, and Key Laboratory of Measurement and Control of CSE, Ministry of Education, Southeast University, Nanjing, China (e-mail: yueseu@gmail.com; liq@jstd.gov.cn).}
\thanks{S. Baldi is with School of Mathematics, Southeast University, Nanjing, China and with Delft Center for Systems and Control, Delft University of
Technology, Delft, The Netherlands (e-mail: S.Baldi@tudelft.nl).}
\thanks{J. Cao is with School of Mathematics, and Jiangsu Provincial Key Laboratory of Networked Collective Intelligence, Southeast University, Nanjing, China (e-mail: jdcao@seu.edu.cn).}
\thanks{B. De Schutter is with Delft Center for Systems and Control, Delft University of Technology, Delft, The Netherlands (e-mail: B.DeSchutter@tudelft.nl).}
}
\maketitle

\begin{abstract}
In this paper, the time-varying formation and time-varying formation tracking problems are solved for linear multi-agent systems over digraphs without the knowledge of the eigenvalues of the Laplacian matrix associated to the digraph. The solution to these problems relies on a framework that generalizes the directed spanning tree adaptive method, which was originally limited to consensus problems. Necessary and sufficient conditions for the existence of solutions to the formation problems are derived. Asymptotic convergence of the formation errors is proved via graph theory and Lyapunov analysis.
\end{abstract}

\begin{IEEEkeywords}
Adaptive control, directed graphs, multi-agent systems, formation control.
\end{IEEEkeywords}

%
\IEEEpeerreviewmaketitle

\section{Introduction}
\label{intro}
Formation control of multi-agent systems has captured increasing attention due to applications in spacecraft formation flying, search and rescue operations, intelligent transport system, to name a few \cite{oh2015survey,hu2020cooperative}. By designing appropriate feasibility conditions, results on time-varying formation (TVF) \cite{liu2012iterative,brinon2014cooperative,dong2014formation}, and time-varying formation tracking (TVFT) \cite{dong2018time,dong2017time,yu2018practical} have extended the time-invariant formation case. These designs rely on consensus-based methodologies \cite{ren2006consensus,xiao2009finite,yu2011second,baldi2018output} to accomplish the formation in a \emph{distributed} way (i.e. using local information only). However, a common notable problem in such methods is the required knowledge of the smallest nonzero eigenvalue of the communication Laplacian matrix, which might be unknown in large networks.

It is known that by suitably designing time-varying coupling weights in the network, the knowledge of the Laplacian eigenvalues can be overcome: this was shown for consensus \cite{yu2012distributed,li2013distributed,cheng2018fully}, containment \cite{wen2017robust}, or TVF \cite{wang2017distributed,wang2018distributed,yue2020distributed} problems over undirected or detail-balanced/strongly-connected digraphs. For more general digraphs, the analysis is challenging due to the complexity of the Laplacian. To address this complexity, a distributed adaptive control method has recently been studied for synchronization/consensus problems in \cite{yu2015distributed,yu2018distributed,yu2018directed}: this method exploits the presence of a directed spanning tree (DST) in the network. However, a unifying DST-based adaptive control framework encompassing TVF and TVFT problems is not available. Most notably, it is unclear how to design appropriate feasibility conditions for time-varying formations in the DST framework. These observations motivate this study.

The main contribution of this paper is a unifying DST-based adaptive control framework addressing TVF and TVFT: not only does the proposed framework still avoid the knowledge of the Laplacian eigenvalues, but it also help to establish necessary and sufficient conditions for such time-varying formations from a different perspective. For TVF without leaders, a novel class of feasibility conditions is proposed, which is more efficient to check than the feasibility conditions in the state of the art. The proposed conditions generalize in a natural unified way in the presence of one or more leaders.

The paper is organized as follows: Section \ref{Preli} gives some preliminaries and formulates the problems. Sections \ref{tvfsec}-\ref{tvftsec} present the main results for TVF and TVFT, respectively. Numerical examples are provided in Section \ref{exsec}. Section \ref{consec} concludes this paper.

\section{Preliminaries and Problem Statement}
\label{Preli}
\subsection{Notations}
  Let $\mathbb{R}$, $\mathbb{R}^+$, $\mathbb{R}^{n}$, $\mathbb{R}^{n\times p}$ represent the sets of real scalars, real positive scalars, $n$-dimensional column vectors, $n\times p$ matrices, respectively. Let $\textbf{I}_{n}$ and $\textbf{1}_{n}$ be the $n\times n$ identity matrix, and the column vector with $n$ elements being one, respectively. Zero vectors and zero matrices are all denoted by $0$. For a vector $x$, let $\|x\|$ denote the Euclidean norm. For a real symmetric matrix $A$, $\lambda_{\text{M}}(A)$ (resp. $\lambda_{\text{m}}(A)$) is its maximum (resp. minimum) eigenvalue, and $A>0$ (resp. $A\geq0$) means that $A$ is positive definite (resp. semi-definite). Denote $\mathcal{I}_N=\{1,2,\cdots,N\}$ as the set of natural numbers up to $N$. Denote $\text{col}(x_1,\cdots,x_N)=({x_1}^T,\cdots,{x_N}^T)^T$ as the column vectorization. The abbreviation $\text{diag}(\cdot)$ is the diagonalization operator and 'N-S' is short for 'necessary and sufficient'. The cardinality of a set is denoted by $|\cdot|$ and the difference (resp. union) of the sets $\mathcal{S}_1$ and $\mathcal{S}_2$ is denoted by $\mathcal{S}_1\setminus\mathcal{S}_2$ (resp. $\mathcal{S}_1\bigcup\mathcal{S}_2$). Moreover, $\otimes$ stands for the Kronecker product.

\subsection{Graph Theory}
 A weighted digraph $\mathcal{G}(\mathcal{V},\mathcal{E},\mathcal{A})$ is specified by the node set $\mathcal{V}=\{1,\cdots,N\}$, the edge set $\mathcal{E}=\{e_{ij}|i\rightarrow j, i\neq j\}$ and the weighted adjacency matrix $\mathcal{A}=(a_{ij})\in\mathbb{R}^{N\times N}$. In the matrix $\mathcal{A}$, $a_{ij}>0$ if $e_{ji}\in\mathcal{E}$, indicating that $j$ (resp. $i$) is an in-neighbor (resp. out-neighbor) of $i$ (resp. $j$), which can be denoted by $j\in\mathcal{N}_{1}(i)$ (resp. $i\in\mathcal{N}_{2}(j)$). Let $\mathcal{D}_2(i)=|\mathcal{N}_{2}(i)|$ be the out-degree of $i$. Moreover, $\mathcal{L}=(\mathcal{L}_{ij})\in\mathbb{R}^{N\times N}$ is the Laplacian matrix of $\mathcal{G}$, which is defined as: $\mathcal{L}_{ij}=-a_{ij}$, if $i\neq j$, and $\mathcal{L}_{ii}=\sum_{k=1,k\neq i}^{N}a_{ik}$, $\forall i\in\mathcal{I}_N$. A path of $\mathcal{G}$ from node $1$ to $s$ corresponds to an ordered sequence of edges $(e_{1,p_1},e_{p_1,p_2},\cdots,e_{p_{s},s})$. A digraph $\mathcal{G}$ is \textit{weakly-connected} if every pair of nodes are connected by a path disregarding the directions. A \textit{directed spanning tree (DST)} of $\mathcal{G}$ is a subgraph where there is a node called the root, that has no in-neighbors, such that one can find a path from the root to every other node. In a DST, if $j$ is an in-neighbor of $i$, one can also say that $j$ is a parent node, and $i$ is a child node. Moreover, a node is called a stem if it has at least one child, and a leaf otherwise.

\subsection{Problem Statement}
 Let $\mathcal{G}(\mathcal{V},\mathcal{E},\mathcal{A})$ denote the digraph that characterizes the communication topology among $N$ agents, where the weights in $\mathcal{A}$ represent the communication strengths. The dynamics of the agents are given by
 \begin{align}\label{agents}
   \dot{x}_i=Ax_i+Bu_i,  \quad i\in\mathcal{I}_N
 \end{align}
 where $x_i\in\mathbb{R}^n$ is the state of agent $i$ and $u_i\in\mathbb{R}^m$ is its control input to be designed. Let the pair $(A,B)$ be stabilizable.

 \begin{definition}[TVF]\label{tvf}
    The multi-agent system (\ref{agents}) is said to achieve the time-varying formation (TVF) defined by the time-varying vector $h(t)=\text{col}(h_1(t),h_2(t),\cdots,h_N(t))$ if, for any initial states, there holds
    \begin{align}\label{tvfdef}
       \lim_{t\rightarrow\infty}((x_i-h_i)-(x_j-h_j))&=0, \ \forall i,j\in\mathcal{I}_N.
    \end{align}
 \end{definition}

 Now consider the case where there are $M$ leader agents, $M\geq 1$, in the network $\mathcal{G}$. Without loss of generality, let the first $M$ agents be the leaders, and the rest be the followers:
  \begin{align}\label{leaders}
   &\dot{x}_l=Ax_l,  \quad\quad\quad\quad l\in\mathcal{I}_M,        \nonumber\\
   &\dot{x}_i=Ax_i+Bu_i, \quad i\in\mathcal{I}_N\setminus\mathcal{I}_M.
  \end{align}
 As leaders have no in-neighbors, the Laplacian matrix of $\mathcal{G}$ can be partitioned as
 \begin{equation}\label{lhat}       
  \mathcal{L}=\left(                 
  \begin{array}{cc}   
      0 & 0 \\
    \mathcal{L}_1 & \mathcal{L}_2   
  \end{array}
 \right)                 
 \end{equation}
 where $\mathcal{L}_1\in\mathbb{R}^{(N-M)\times M}$ and $\mathcal{L}_2\in\mathbb{R}^{(N-M)\times(N-M)}$.

\begin{definition}[\cite{dong2017time}]\label{wellinf}
  A follower is called \emph{well-informed} if all leaders are its in-neighbors, and is \emph{uninformed} if no leader is its in-neighbor.
\end{definition}

 \begin{definition}[TVFT]\label{tvft}
    The multi-agent system (\ref{leaders}) is said to achieve the time-varying formation tracking (TVFT) defined by the time-varying vector $h^F(t)=$ $\text{col}(h_{M+1}(t),h_{M+2}(t),\cdots,h_N(t))$ and by positive constants $\beta_l$, $l\in\mathcal{I}_M$, satisfying $\sum_{l=1}^{M}\beta_l=1$ if, for any initial states, there holds
    \begin{align}\label{tvftml}
      \lim_{t\rightarrow\infty}\big(x_i-h_i-\sum_{l=1}^M\beta_lx_{l}\big)=0, \ \forall i\in\mathcal{I}_N\setminus\mathcal{I}_M.
    \end{align}
    For the special case $M=1$, (\ref{tvftml}) becomes
    \begin{align}\label{tvftsl}
      \lim_{t\rightarrow\infty}\big(x_i-h_i-x_1\big)=0, \quad i=2,\cdots,N.
    \end{align}
 \end{definition}

  The goal of this paper is to solve the problems outlined by (\ref{tvfdef}), (\ref{tvftml}) and (\ref{tvftsl}) without the knowledge of the Laplacian eigenvalues, by consistently generalizing the DST idea.

\section{DST-Based Distributed Adaptive TVF}\label{tvfsec}
 This section appropriately extends the DST-based adaptive control method to solve the TVF problem of Definition \ref{tvf}. The following is a standard connectivity assumption (\cite{liu2012iterative,dong2014formation}, etc).
  \begin{assumption}\label{dst}
   The digraph $\mathcal{G}$ has at least one DST.
 \end{assumption}

 Under Assumption \ref{dst}, one can select a DST $\bar{\mathcal{G}}(\mathcal{V},\bar{\mathcal{E}},\bar{\mathcal{A}})$ of $\mathcal{G}$. Note that finding a DST can also be conducted in a distributed manner, but it requires the agents to exchange more information. As in \cite{yu2018distributed}, we assume that $\bar{\mathcal{G}}$ is known. Without loss of generality, let node $1$ be the root of the $\bar{\mathcal{G}}$. Correspondingly, let $\bar{\mathcal{L}}$ be the Laplacian matrix of $\bar{\mathcal{G}}$ and $\bar{\mathcal{N}}_{\text{2}}(i)$ be the set of out-neighbors of $i$ in $\bar{\mathcal{G}}$.

 Let $i_k$ denote the unique parent of node $k+1$ in $\bar{\mathcal{G}}$ for $k\in\mathcal{I}_{N-1}$, then $\bar{\mathcal{E}}=\{e_{i_k,k+1}|k\in\mathcal{I}_{N-1}\}\subset\mathcal{E}$. For compactness, define $d_i(t)=x_i(t)-h_i(t)$ as the formation state, i.e., the distance between the current state and the desired formation offset of agent $i$. Denote $x=\text{col}(x_1,\cdots,x_N)$, $d=\text{col}(d_1,\cdots,d_N)$.

 We propose the DST-based adaptive TVF controller as:
 \begin{align}
   \label{tvfui}
   &u_i=K_0x_i+K_1d_i+K_2\sum_{j\in\mathcal{N}_{1}(i)}\alpha_{ij}(t)(d_i-d_j)
 \end{align}
 with the time-varying coupling weights
 \begin{align}
   \label{tvfapij}
   &\alpha_{ij}(t)=\left\{
                    \begin{array}{ll}
                      a_{ij}, & \text{if}\quad e_{ji}\in\mathcal{E}\setminus\bar{\mathcal{E}}, \\
                      \bar{a}_{k+1,i_k}(t), & \text{if}\quad  e_{ji}\in\bar{\mathcal{E}}.
                    \end{array}
                  \right. \\
   \label{alphabardot}
   &\dot{\bar{a}}_{k+1,i_k}=\rho_{k+1,i_k}\Big((d_{i_k}-d_{k+1})-   \nonumber\\
   & \quad\qquad\qquad \sum\limits_{j\in\bar{\mathcal{N}}_{2}(k+1)}(d_{k+1}-d_j)\Big)^T\Gamma(d_{i_k}-d_{k+1}).
 \end{align}
 In (\ref{tvfui})-(\ref{alphabardot}), $K_0$, $K_1$, $K_2$, and $\Gamma$ are gains to be designed, and $\rho_{k+1,i_k}\in\mathbb{R}^+$. In (\ref{tvfui}), $\alpha_{ij}(t)$ is the coupling weight between agent $i$ and its in-neighbor $j$, which is time-varying only if the corresponding edge appears in $\bar{\mathcal{G}}$, i.e., $j=i_k$ and $i=k+1$ for some $k\in\mathcal{I}_{N-1}$, and constant otherwise.

\begin{remark}\label{k0k1k2}
   The structure of controller (\ref{tvfui}) is as follows. The gain $K_0$ is to be designed to make the time-varying formation $h(\cdot)$ feasible; the gain $K_1$ is needed to control the average formation signal $d_{\text{ave}}=\frac{1}{N}\sum_{j\in\mathcal{I}_N}d_j$; the gain $K_2$ is a consensus gain. Different from the related literature \cite{liu2012iterative,dong2014formation}, the DST structure is explicitly used in the control law (\ref{tvfui})-(\ref{alphabardot}).
\end{remark}
 \subsection{Technical lemmas}
 \begin{lemma}[N-S condition for TVF]\label{lmaXi}
  Under Assumption \ref{dst}, and for any DST $\bar{\mathcal{G}}$, define $\Xi\in\mathbb{R}^{(N-1)\times N}$ as
  \begin{align}\label{Xi}
    \Xi_{kj}=\left\{
              \begin{array}{ll}
                -1, & \text{if}\quad j=k+1,  \\
                1, & \text{if}\quad j=i_k,  \\ 
                0, & \text{otherwise}.
              \end{array}
            \right.
 \end{align}
 Then, the TVF for multi-agent system (\ref{agents}) can be achieved if and only if
   \begin{align}\label{tvfns}
    \lim_{t\rightarrow\infty}\|(\Xi\otimes\textbf{I}_n)d(t)\|=0.
    \end{align}
\end{lemma}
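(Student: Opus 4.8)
The plan is to show that the TVF condition \eqref{tvfdef} is equivalent to \eqref{tvfns} by relating both statements to a common geometric object, namely the subspace of $\mathbb{R}^{Nn}$ on which all the pairwise differences $(d_i-d_j)$ vanish. First I would observe that \eqref{tvfdef} says precisely that $d(t)$ approaches the consensus-type subspace $\mathcal{C}=\{\,\mathbf{1}_N\otimes v : v\in\mathbb{R}^n\,\}$ as $t\to\infty$; equivalently, $\lim_{t\to\infty}\|(Q\otimes\mathbf{I}_n)d(t)\|=0$ for any fixed matrix $Q$ whose kernel is exactly $\mathrm{span}(\mathbf{1}_N)$. So the whole statement reduces to the purely linear-algebraic claim that $\ker(\Xi)=\mathrm{span}(\mathbf{1}_N)$, after which \eqref{tvfdef} and \eqref{tvfns} are two ways of writing the same limit (up to multiplication by bounded matrices with matching kernels, which does not affect whether the norm tends to zero).

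The core step is therefore to analyze the $(N-1)\times N$ matrix $\Xi$ built from the DST $\bar{\mathcal{G}}$. I would argue as follows. Clearly $\Xi\,\mathbf{1}_N=0$, since each row of $\Xi$ has exactly one entry equal to $1$ and one equal to $-1$. For the reverse inclusion, suppose $\Xi y=0$ for some $y=\mathrm{col}(y_1,\dots,y_N)$; row $k$ then reads $y_{i_k}=y_{k+1}$ for every $k\in\mathcal{I}_{N-1}$. Since $\bar{\mathcal{G}}$ is a directed spanning tree rooted at node $1$, every node $k+1$ is reached from the root by a directed path, and walking along that path and repeatedly applying the equalities $y_{i_k}=y_{k+1}$ yields $y_{k+1}=y_1$ for all $k$, i.e. $y=y_1\mathbf{1}_N$. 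Hence $\ker(\Xi)=\mathrm{span}(\mathbf{1}_N)$ and, since $\Xi$ has $N-1$ rows, $\mathrm{rank}(\Xi)=N-1$; in particular the rows of $\Xi$ are linearly independent. Consequently $\ker(\Xi\otimes\mathbf{I}_n)=\mathrm{span}(\mathbf{1}_N)\otimes\mathbb{R}^n=\mathcal{C}$.

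For the equivalence itself I would then make the two directions explicit. ($\Leftarrow$) If $\|(\Xi\otimes\mathbf{I}_n)d(t)\|\to0$, write each difference $d_i-d_j$ as a telescoping sum of consecutive differences along the (undirected) tree path joining $i$ and $j$ in $\bar{\mathcal{G}}$; each such consecutive difference $d_{k+1}-d_{i_k}$ is, up to sign, exactly the block $-\big((\Xi\otimes\mathbf{I}_n)d\big)_k$, so $\|(x_i-h_i)-(x_j-h_j)\|=\|d_i-d_j\|$ is bounded by $(N-1)\|(\Xi\otimes\mathbf{I}_n)d(t)\|\to0$, giving \eqref{tvfdef}. ($\Rightarrow$) Conversely, if \eqref{tvfdef} holds then in particular $d_{k+1}-d_{i_k}\to0$ for every $k$ (these are specific pairwise differences), and stacking these $N-1$ limits is exactly $\|(\Xi\otimes\mathbf{I}_n)d(t)\|\to0$.

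The only delicate point is the path/telescoping argument underlying $\ker(\Xi)=\mathrm{span}(\mathbf{1}_N)$: one must use the defining property of a DST — that there is a unique directed path from the root to every node — so that the chain of equalities $y_{i_k}=y_{k+1}$ actually propagates the root value to all nodes without gaps or cycles. Everything else is bookkeeping with Kronecker products and the elementary fact that, for a bounded linear map, $\|Md(t)\|\to0$ depends only on $\ker M$ when we already know $d(t)$ stays in a bounded neighbourhood of that kernel; I would state this once and reuse it for both directions. I do not expect the Lyapunov/adaptation dynamics \eqref{tvfui}--\eqref{alphabardot} to enter this lemma at all — it is a statement about what TVF \emph{means}, not about how the proposed controller achieves it.
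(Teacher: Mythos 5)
Your argument is correct, and it rests on the same observation the paper relies on: $\Xi^{T}$ is the incidence matrix of the DST $\bar{\mathcal{G}}$, so $\ker(\Xi)=\mathrm{span}(\mathbf{1}_{N})$ and the $N-1$ tree-edge differences control all $N(N-1)/2$ pairwise differences. The only substantive difference is that the paper outsources the core equivalence---$\lim_{t\to\infty}\|(\Xi\otimes\mathbf{I}_{n})d(t)\|=0$ if and only if $\lim_{t\to\infty}\|d_{i}(t)-d_{j}(t)\|=0$ for all $i,j$---to Lemma~3.2 of \cite{yu2018distributed} and then simply invokes Definition~\ref{tvf}, whereas you prove that equivalence from scratch via the kernel computation ($\Xi y=0$ propagates $y_{1}$ down the tree along the unique root-to-node paths) and the telescoping bound $\|d_{i}-d_{j}\|\le(N-1)\|(\Xi\otimes\mathbf{I}_{n})d\|$ along tree paths. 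Both routes are sound; yours buys self-containedness at the cost of a few lines. One cosmetic remark: the ``bounded neighbourhood of the kernel'' caveat you append is unnecessary---if $\ker(\Xi)\subseteq\ker(M)$ then $M=C\Xi$ for some constant matrix $C$, so $\|\Xi d(t)\|\to0$ implies $\|Md(t)\|\to0$ with no boundedness assumption on $d(t)$, and your explicit telescoping argument already sidesteps the issue anyway.
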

 \begin{proof}
  From Lemma 3.2 in \cite{yu2018distributed}, (\ref{tvfns}) holds if and only if $\lim_{t\rightarrow\infty}\|d_i(t)-d_j(t)\|=0, \forall i,j\in\mathcal{I}_N$. Then, Lemma \ref{lmaXi} holds following Definition \ref{tvf} and the definition of $d_i(t)$. In fact, $\Xi^T$ is the incidence matrix associated to $\bar{\mathcal{G}}$.
\end{proof}

 \begin{lemma}[Auxiliary matrix $Q$]\label{lmaQ}
  Under Assumption \ref{dst}, and for any DST $\bar{\mathcal{G}}$, define $Q\in\mathbb{R}^{(N-1)\times(N-1)}$ as $Q=\tilde{Q}+\bar{Q}$ with $\tilde{Q}_{kj}=\sum_{c\in\bar{\mathcal{V}}_{j+1}}(\tilde{\mathcal{L}}_{k+1,c}-\tilde{\mathcal{L}}_{i_k,c})$ and $\bar{Q}_{kj}=\sum_{c\in\bar{\mathcal{V}}_{j+1}}(\bar{\mathcal{L}}_{k+1,c}-\bar{\mathcal{L}}_{i_k,c}).$ Here, $\bar{\mathcal{V}}_{j+1}$ represents the vertex set of the subtree rooting at node $j+1$ and $\tilde{\mathcal{L}}=\mathcal{L}-\bar{\mathcal{L}}$. Then, there holds
  \begin{align}\label{xlqx}
    \Xi\mathcal{L}=Q\Xi
  \end{align} where $\Xi$ is defined in (\ref{Xi}). Moreover, $\bar{Q}$ can be explicitly written as
  \begin{align}\label{qbar}
      \bar{Q}_{kj}=\left\{
                   \begin{array}{ll}
                     \bar{a}_{j+1,i_j}, & \text{if}\quad  j=k,  \\
                     -\bar{a}_{j+1,i_j}, & \text{if}\quad  j=i_k-1,  \\ 
                     0, & \text{otherwise}.
                   \end{array}
                 \right.
  \end{align}
\end{lemma}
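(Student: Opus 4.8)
The plan is to exhibit $Q$ explicitly as $Q=-\Xi\mathcal{L}P$ for a suitable \emph{subtree-aggregation} matrix $P$, and then read off both claims from this product. Define $P\in\mathbb{R}^{N\times(N-1)}$ by $P_{cj}=1$ if $c\in\bar{\mathcal{V}}_{j+1}$ and $P_{cj}=0$ otherwise, so that the $j$-th column of $P$ is the indicator of the subtree rooted at node $j+1$. Since $\Xi^T$ is the incidence matrix of the tree $\bar{\mathcal{G}}$, the matrix $\Xi$ has full row rank $N-1$ with $\ker\Xi=\mathrm{span}(\textbf{1}_N)$; combined with $\mathcal{L}\textbf{1}_N=0$ this already guarantees that a unique $Q$ solving $Q\Xi=\Xi\mathcal{L}$ exists, so the task reduces to identifying it.

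The first key step is the matrix identity $\Xi P=-\textbf{I}_{N-1}$. I would prove it entrywise: the only nonzero entries in row $k$ of $\Xi$ are in columns $k+1$ (value $-1$) and $i_k$ (value $+1$), so $(\Xi P)_{kj}=P_{i_k,j}-P_{k+1,j}=[\,i_k\in\bar{\mathcal{V}}_{j+1}\,]-[\,k+1\in\bar{\mathcal{V}}_{j+1}\,]$. Because the subtrees $\bar{\mathcal{V}}_{j+1}$ are nested and $i_k$ is the parent of $k+1$, the two indicators agree unless $j+1=k+1$, in which case $k+1\in\bar{\mathcal{V}}_{k+1}$ but $i_k\notin\bar{\mathcal{V}}_{k+1}$, yielding $-1$ exactly on the diagonal. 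A companion computation, using the disjoint decomposition $\bar{\mathcal{V}}_{j+1}=\{j+1\}\cup\bigcup_{c\in\bar{\mathcal{N}}_2(j+1)}\bar{\mathcal{V}}_{c}$, gives $-P\Xi=\textbf{I}_N-\textbf{1}_N e_1^T$, where $e_1$ is the first canonical basis vector (node $1$ being the root). Consequently $\mathcal{L}(-P)\Xi=\mathcal{L}-(\mathcal{L}\textbf{1}_N)e_1^T=\mathcal{L}$, so $Q:=-\Xi\mathcal{L}P$ satisfies $Q\Xi=\Xi\bigl(\mathcal{L}(-P)\Xi\bigr)=\Xi\mathcal{L}$, which is (\ref{xlqx}). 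Expanding the $(k,j)$-entry of $-\Xi\mathcal{L}P$ and using $\mathcal{L}=\tilde{\mathcal{L}}+\bar{\mathcal{L}}$ gives $Q_{kj}=\sum_{c\in\bar{\mathcal{V}}_{j+1}}(\mathcal{L}_{k+1,c}-\mathcal{L}_{i_k,c})=\tilde{Q}_{kj}+\bar{Q}_{kj}$, which is the stated decomposition.

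For the explicit form (\ref{qbar}), I would specialize $\bar{Q}_{kj}=\sum_{c\in\bar{\mathcal{V}}_{j+1}}(\bar{\mathcal{L}}_{k+1,c}-\bar{\mathcal{L}}_{i_k,c})$ to the tree Laplacian $\bar{\mathcal{L}}$. In $\bar{\mathcal{G}}$ every non-root node $k+1$ has a single in-neighbor, namely $i_k$, so row $k+1$ of $\bar{\mathcal{L}}$ has exactly the two nonzero entries $\bar{\mathcal{L}}_{k+1,k+1}=\bar{a}_{k+1,i_k}$ and $\bar{\mathcal{L}}_{k+1,i_k}=-\bar{a}_{k+1,i_k}$ (and row $1$ is zero). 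Hence the vector $c\mapsto\bar{\mathcal{L}}_{k+1,c}-\bar{\mathcal{L}}_{i_k,c}$ is supported only on the three consecutive ancestors $k+1$, $i_k$, and the parent of $i_k$, and it sums to zero over all $c$ (both rows being rows of a Laplacian). By the nesting of the subtrees, the partial sum of this vector over $\bar{\mathcal{V}}_{j+1}$ equals $\bar{a}_{k+1,i_k}$ when $j+1=k+1$ (only $k+1$ lies in the subtree), equals $-\bar{a}_{k+1,i_k}$ when $j+1=i_k$ (both $k+1$ and $i_k$ lie in it, the third ancestor being outside), and equals $0$ otherwise (the subtree contains either none or all three support nodes, the latter giving the total sum $0$); this is precisely (\ref{qbar}).

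I expect the bookkeeping in this last step to be the main obstacle: one must treat the boundary cases in which $i_k$ or its parent coincides with the root, and must justify carefully — using the nesting of the subtrees and the fact that $i_k$ is the \emph{parent} of $k+1$ — that a subtree $\bar{\mathcal{V}}_{j+1}$ never intersects the three-node support set in an "intermediate" way other than the enumerated cases. Everything else follows routinely from the two matrix identities $\Xi P=-\textbf{I}_{N-1}$ and $-P\Xi=\textbf{I}_N-\textbf{1}_N e_1^T$.
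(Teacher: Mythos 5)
Your proposal is correct and takes essentially the same route as the paper's appendix proof: your subtree-indicator matrix $P$ is just the complement $\textbf{1}_N\textbf{1}_{N-1}^T-J$ of the paper's auxiliary matrix $J$, your identity $-P\Xi=\textbf{I}_N-\textbf{1}_Ne_1^T$ is equivalent (after left-multiplication by $\mathcal{L}$, using $\mathcal{L}\textbf{1}_N=0$ and $\Xi\textbf{1}_N=0$) to the paper's step $\mathcal{L}=\mathcal{L}J\Xi$, so that $Q=-\Xi\mathcal{L}P=\Xi\mathcal{L}J$, and your case analysis of the partial sums over the three-node support $\{k+1,\,i_k,\,\mathrm{parent}(i_k)\}$ mirrors the paper's three cases for $\bar{Q}_{kj}$. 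The only (welcome) cosmetic difference is that you isolate the purely graph-theoretic identities for $\Xi$ and $P$ from the Laplacian, which absorbs the rank-one bookkeeping that the paper handles via the out-degree terms $\bar{\mathcal{D}}_2(j)$.
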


\begin{proof}
    See the appendix. The proof revises and completes the results in \cite{yu2018distributed}, \cite{yu2018directed}, since step 1) of the proof ($\mathcal{L}=\mathcal{L}J\Xi$) is missing there.
\end{proof}

\begin{remark}\label{rmq}
  Lemma \ref{lmaQ} states that the information of the Laplacian $\mathcal{L}$ can be transferred into a reduced-order matrix $Q$ through a commutative-like multiplication law (\ref{xlqx}). For the off-diagonal elements of $\bar{Q}$, $\bar{Q}_{kj}=-\bar{Q}_{jj}$ if and only if $j+1$ is the parent of $k+1$ in $\bar{\mathcal{G}}$.
\end{remark}

 \begin{lemma}[Feasibility conditions]\label{lmatvf}%
   Under Assumption \ref{dst}, let us consider controller (\ref{tvfui}) with time-varying coupling weights (\ref{tvfapij}) for any DST $\bar{\mathcal{G}}$. Suppose that the origin of the linear time-varying system
 \begin{align}\label{dbdot}
  \dot{d}_L=(\textbf{I}_{N-1}\otimes(A+BK_0+BK_1)+Q(t)\otimes BK_2)d_L    \nonumber\\
\end{align}
   is globally asymptotically stable, where $Q(t)=\tilde{Q}+\bar{Q}(t)$ with fixed $\tilde{Q}$ defined as in Lemma \ref{lmaQ}, and
\begin{align}\label{qb}
 \bar{Q}_{kj}(t)=\left\{
                   \begin{array}{ll}
                     \bar{a}_{j+1,i_j}(t), &\text{if}\quad  j=k,  \\
                     -\bar{a}_{j+1,i_j}(t), &\text{if}\quad  j=i_k-1,  \\ 
                     0, & \text{otherwise}.
                   \end{array}
                 \right.
\end{align}
 Then, the TVF problem can be solved by controller (\ref{tvfui}) if and only if
    \begin{align}\label{fes}
       \lim_{t\rightarrow\infty}(A+BK_0)(h_{i_k}&(t)-h_{k+1}(t)) \nonumber\\
                                       &-(\dot{h}_{i_k}(t)-\dot{h}_{k+1}(t))=0
     \end{align}
     holds $\forall k\in\mathcal{I}_{N-1}$.
\end{lemma}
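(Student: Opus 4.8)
The plan is to rewrite the closed loop in the formation-error coordinate $d_i=x_i-h_i$, project it with $\Xi\otimes\textbf{I}_n$, and use the commutation law of Lemma~\ref{lmaQ} to turn it into the system (\ref{dbdot}) perturbed by a forcing term; Lemma~\ref{lmaXi} then converts ``TVF is achieved'' into ``this perturbed system converges to the origin''. Concretely, substituting (\ref{tvfui})--(\ref{tvfapij}) into (\ref{agents}) and using $x_i=d_i+h_i$ gives, for each $i$, $\dot d_i=(A+BK_0+BK_1)d_i+BK_2\sum_{j\in\mathcal{N}_1(i)}\alpha_{ij}(t)(d_i-d_j)+(A+BK_0)h_i-\dot h_i$. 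Stacking over $i$, and letting $\mathcal{L}^{\alpha}(t)=\tilde{\mathcal{L}}+\bar{\mathcal{L}}(t)$ denote the Laplacian carrying the weights $\alpha_{ij}(t)$ of (\ref{tvfapij}) --- so that $\sum_{j}\alpha_{ij}(t)(d_i-d_j)$ is the $i$-th block of $(\mathcal{L}^{\alpha}(t)\otimes\textbf{I}_n)d$ --- yields the collective dynamics
\[
\dot d=\big(\textbf{I}_N\otimes(A+BK_0+BK_1)+\mathcal{L}^{\alpha}(t)\otimes BK_2\big)d+\big(\textbf{I}_N\otimes(A+BK_0)\big)h-\dot h.
\]

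Next I would left-multiply by $\Xi\otimes\textbf{I}_n$ and set $d_L:=(\Xi\otimes\textbf{I}_n)d$. Since $\tilde{\mathcal{L}}$ is fixed and $\bar{\mathcal{L}}(t)$ has the structure of (\ref{qbar}) with each $\bar a_{j+1,i_j}$ replaced by $\bar a_{j+1,i_j}(t)$, the time-varying analogue of Lemma~\ref{lmaQ} gives $\Xi\,\mathcal{L}^{\alpha}(t)=Q(t)\,\Xi$ with $Q(t)=\tilde Q+\bar Q(t)$ as in (\ref{qb}). Pulling $\Xi\otimes\textbf{I}_n$ through the Kronecker products (using $(\Xi\otimes\textbf{I}_n)(\textbf{I}_N\otimes C)=(\textbf{I}_{N-1}\otimes C)(\Xi\otimes\textbf{I}_n)$ and $(\Xi\mathcal{L}^{\alpha}(t)\otimes BK_2)d=(Q(t)\otimes BK_2)d_L$) then shows that $d_L$ satisfies (\ref{dbdot}) with an added forcing term $g(t)$ on the right-hand side, where, by (\ref{Xi}), the $k$-th block of $g(t)$ is $\sum_j\Xi_{kj}\big[(A+BK_0)h_j-\dot h_j\big]=(A+BK_0)(h_{i_k}(t)-h_{k+1}(t))-(\dot h_{i_k}(t)-\dot h_{k+1}(t))$ --- precisely the left-hand side of (\ref{fes}). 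By Lemma~\ref{lmaXi}, TVF is achieved if and only if $\lim_{t\to\infty}\|d_L(t)\|=0$, so it only remains to prove that, under the assumed GAS of (\ref{dbdot}), $d_L(t)\to0$ holds if and only if $g(t)\to0$.

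For sufficiency ($g(t)\to0\Rightarrow d_L(t)\to0$) I would use the variation-of-constants formula $d_L(t)=\Psi(t,0)d_L(0)+\int_0^t\Psi(t,s)g(s)\,ds$, $\Psi(\cdot,\cdot)$ being the transition matrix of (\ref{dbdot}); I expect the GAS to come with a quadratic Lyapunov certificate, hence an exponential bound on $\Psi$, so that $\Psi(t,0)d_L(0)\to0$ and, splitting the convolution at a large $T$, $\int_0^t\Psi(t,s)g(s)\,ds\to0$ whenever $g(t)\to0$. For necessity I would write $g(t)=\dot d_L(t)-\big(\textbf{I}_{N-1}\otimes(A+BK_0+BK_1)+Q(t)\otimes BK_2\big)d_L(t)$: if $d_L(t)\to0$ then the second term vanishes (since $Q(t)$, hence the whole system matrix, is bounded), and since $\dot d_L$ is uniformly continuous --- its derivative being controlled by $Q,\dot Q,g,\dot g$, all bounded --- Barbalat's lemma yields $\dot d_L(t)\to0$, whence $g(t)\to0$, i.e., (\ref{fes}) holds $\forall k\in\mathcal{I}_{N-1}$.

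I expect the reduction (the $\Xi$-projection and $\Xi\mathcal{L}^{\alpha}(t)=Q(t)\Xi$) to be routine once Lemma~\ref{lmaQ} is in hand; the main obstacle is the equivalence step. One must upgrade the bare GAS of (\ref{dbdot}) to the uniform/exponential stability actually needed for the perturbed-system estimates --- which should be available here because the GAS is certified through a quadratic Lyapunov function --- and one must secure the boundedness of the adaptive weights $\bar a_{k+1,i_k}(t)$ and of their derivatives (so that $Q(t)$ and $\dot Q(t)$ are bounded), together with boundedness of the prescribed signals $h,\dot h,\ddot h$ (so that $g$ and $\dot g$ are bounded). These boundedness facts are external to this lemma and come from the closed-loop Lyapunov analysis of the theorem that invokes it.
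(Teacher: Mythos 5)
Your proposal follows the paper's proof essentially verbatim: the same passage to $\bar d=(\Xi\otimes\textbf{I}_n)d$, the same use of Lemma~\ref{lmaQ} to write $\Xi\mathcal{L}(t)=Q(t)\Xi$ and reduce the closed loop to (\ref{dbdot}) driven by the forcing $(\Xi\otimes(A+BK_0))h-(\Xi\otimes\textbf{I}_n)\dot h$, and the same appeal to Lemma~\ref{lmaXi} to convert TVF into convergence of that projected state. The only difference is that you attempt to actually justify the equivalence ``the forced solution tends to zero iff the forcing tends to zero'' (via variation of constants and Barbalat), a step the paper asserts without proof; the extra hypotheses you flag (uniform/exponential stability of (\ref{dbdot}) rather than bare GAS, boundedness of $Q$, $\dot Q$ and of the formation signals) are genuinely needed for that step, so your version is, if anything, more careful than the paper's.
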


\begin{proof}
    Let $\bar{d}_k(t)=d_{i_k}(t)-d_{k+1}(t)$ be the error vector between the parent and the child nodes of the directed edge $e_{i_k,k+1}$, $k\in\mathcal{I}_{N-1}$, and denote $\bar{d}=\text{col}(\bar{d}_1,\cdots,\bar{d}_N)$. Then, $\bar{d}=(\Xi\otimes\textbf{I}_n)d$. From Lemma \ref{lmaXi}, it remains to prove that $\lim_{t\rightarrow\infty}\|\bar{d}(t)\|=0$ under the given conditions.

 Based on (\ref{agents}) and (\ref{tvfui}), the dynamics of $x(t)$ is given by
 \begin{align}\label{xdot}
  \dot{x}=(\textbf{I}_N\otimes(A+BK_0&+BK_1))x+(\mathcal{L}(t)\otimes BK_2)d    \nonumber\\
                                        &-(\textbf{I}_N\otimes BK_1)h
\end{align}
where $\mathcal{L}(t)$ is the Laplacian matrix of $\mathcal{G}$ at time $t$ due to the adaptive mechanisms. Then, it follows from (\ref{xdot}) and the definitions of $d$ and $\bar{d}$ that
 \begin{align}\label{dtdot}
  \dot{\bar{d}}=&(\textbf{I}_{N-1}\otimes(A+BK_0+BK_1))\bar{d}+(\Xi\mathcal{L}(t)\otimes BK_2)d    \nonumber\\
                  &+(\Xi\otimes(A+BK_0))h-(\Xi\otimes \textbf{I}_n)\dot{h}      \nonumber\\
                =&(\textbf{I}_{N-1}\otimes(A+BK_0+BK_1)+Q(t)\otimes BK_2)\bar{d}    \nonumber\\
                  &+(\Xi\otimes(A+BK_0))h-(\Xi\otimes \textbf{I}_n)\dot{h}
\end{align}
 where Lemma \ref{lmaQ} is used to get the second equality. Given that the linear system (\ref{dbdot}) asymptotically converges to zero, one knows that $\lim_{t\rightarrow\infty}\|\bar{d}(t)\|=0$ if and only if
\begin{align}\label{fexi}
    \lim_{t\rightarrow\infty}(\Xi\otimes(A+BK_0))h(t)-(\Xi\otimes \textbf{I}_n)\dot{h}(t)=0.
\end{align}

From the definition of $\Xi$, condition (\ref{fes}) is equivalent to (\ref{fexi}). This completes the proof.
\end{proof}

\subsection{Main result}
The design process of the TVF controller is summarized in Algorithm \ref{tvfalg}, and analyzed in the following theorem.
 \begin{algorithm}\caption{TVF Controller Design}\label{tvfalg}
  \begin{enumerate}
     \item Find a constant $K_0$ such that the formation feasibility condition
    \begin{align}\label{fescon}
       (A+BK_0)(h_{i_k}&(t)-h_{k+1}(t)) \nonumber\\
                                       &-(\dot{h}_{i_k}(t)-\dot{h}_{k+1}(t))=0
     \end{align}
     holds $\forall k\in\mathcal{I}_{N-1}$ for any DST $\bar{\mathcal{G}}$. If such $K_0$ exists, continue; else, the algorithm terminates without solutions;
     \item\label{s2a1} Choose $K_1$ such that $(A+BK_0+BK_1,B)$ is stabilizable (using, e.g., pole placement). For some $\eta$, $\theta\in\mathbb{R}^+$, solve the following LMI:
     \begin{align}\label{mainlmi}       
        (A+BK_0+BK_1)P+&P(A+BK_0+BK_1)^T   \nonumber\\
                       &-\eta BB^T+\theta P\leq0
     \end{align}
     to get a $P>0$;
     \item\label{s3a1} Set $K_2=-B^TP^{-1}$, $\Gamma=P^{-1}BB^TP^{-1}$ and choose scalars $\rho_{k+1,i_k}\in\mathbb{R}^+$.
  \end{enumerate}
 \end{algorithm}

 \begin{theorem}[Main result for TVF]\label{tvfthrem}
   Under Assumption \ref{dst}, and feasibility condition (\ref{fescon}), the TVF problem in Definition \ref{tvf} is solved by controller (\ref{tvfui}) with adaptive coupling weights (\ref{tvfapij})-(\ref{alphabardot}), along the designs in Algorithm \ref{tvfalg}.
 \end{theorem}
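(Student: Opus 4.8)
The plan is to reduce the claim, via Lemmas~\ref{lmaXi} and~\ref{lmatvf}, to the assertion that the edge-error variable $\bar d:=(\Xi\otimes\mathbf I_{n})d$ tends to zero along the closed loop, and then to prove that convergence by a Lyapunov argument that couples $\bar d$ with the adaptation errors. Under the feasibility condition (\ref{fescon}) the inhomogeneous term in the $\bar d$-dynamics of Lemma~\ref{lmatvf} vanishes identically, so the closed loop obeys $\dot{\bar d}=(\mathbf I_{N-1}\otimes\bar A+Q(t)\otimes BK_{2})\bar d$ with $\bar A:=A+BK_{0}+BK_{1}$ and $Q(t)=\tilde Q+\bar Q(t)$, where $\tilde Q$ is fixed and $\bar Q(t)$ carries the adaptive weights as in (\ref{qb}). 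I would first rewrite the adaptive law (\ref{alphabardot}) in these coordinates: for the DST edge $e_{i_{k},k+1}$ it reads $\dot{\bar a}_{k+1,i_{k}}=\rho_{k+1,i_{k}}\big(\bar d_{k}-\sum_{c:\,i_{c}=k+1}\bar d_{c}\big)^{T}\Gamma\bar d_{k}$, i.e. (error on the edge into node $k+1$) minus (sum of errors on the edges out of node $k+1$ in the DST), paired through $\Gamma$ with the in-edge error. After this reduction the theorem becomes a pure stability statement for the pair $(\bar d,\{\bar a_{k+1,i_{k}}\})$, which Lemma~\ref{lmaXi} converts back to the TVF of Definition~\ref{tvf}.

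For the Lyapunov step I would take
\[
V=\bar d^{T}(\mathbf I_{N-1}\otimes P^{-1})\bar d+\sum_{k=1}^{N-1}\frac{1}{\rho_{k+1,i_{k}}}\big(\bar a_{k+1,i_{k}}-\hat\alpha_{k}\big)^{2},
\]
with $P$ the solution of (\ref{mainlmi}) and positive constants $\hat\alpha_{k}$ to be fixed later. Pre- and post-multiplying (\ref{mainlmi}) by $P^{-1}$ gives $P^{-1}\bar A+\bar A^{T}P^{-1}-\eta\Gamma+\theta P^{-1}\le0$, with $\Gamma=P^{-1}BB^{T}P^{-1}$ and $P^{-1}BK_{2}=-\Gamma$. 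Differentiating $V$ along the closed loop, the term $2\bar d^{T}(\mathbf I\otimes P^{-1})(Q(t)\otimes BK_{2})\bar d=-2\bar d^{T}(Q(t)\otimes\Gamma)\bar d$ splits into a fixed part $-2\bar d^{T}(\tilde Q\otimes\Gamma)\bar d$ and a weight-dependent part. The decisive computation---and the first place the tree structure is used---is that the weight-dependent part of $-2\bar d^{T}(\bar Q(t)\otimes\Gamma)\bar d$ cancels \emph{exactly} the $\bar a_{k+1,i_{k}}$-proportional part of $\tfrac{d}{dt}\sum_{k}\rho_{k+1,i_{k}}^{-1}(\bar a_{k+1,i_{k}}-\hat\alpha_{k})^{2}$; this rests on Remark~\ref{rmq} (every off-diagonal entry of $\bar Q$ equals minus the diagonal entry of its parent's row) together with the ``parent-minus-children'' shape of (\ref{alphabardot}), so that the bookkeeping balances across the whole DST.

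After the cancellation, the remaining $\bar A$-dependent term $\bar d^{T}(\mathbf I\otimes(P^{-1}\bar A+\bar A^{T}P^{-1}))\bar d$ is handled with the LMI bound $P^{-1}\bar A+\bar A^{T}P^{-1}\le\eta\Gamma-\theta P^{-1}$, leaving
\[
\dot V\le-\theta\,\bar d^{T}(\mathbf I_{N-1}\otimes P^{-1})\bar d+\bar d^{T}\big((\eta\mathbf I-\tilde Q-\tilde Q^{T}-M)\otimes\Gamma\big)\bar d,
\]
where $M=M^{T}$ is defined by $\bar d^{T}(M\otimes\Gamma)\bar d=2\sum_{k}\hat\alpha_{k}\big(\bar d_{k}-\sum_{c:\,i_{c}=k+1}\bar d_{c}\big)^{T}\Gamma\bar d_{k}$, so $M$ has diagonal $2\hat\alpha_{k}$ and, on each DST edge from a (non-root) parent node $p+1$ to a child node $q+1$, off-diagonal entries $-\hat\alpha_{p}$. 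The hard part will be choosing the $\hat\alpha_{k}$ so that the bracketed matrix is $\le0$. Taking all $\hat\alpha_{k}$ equal does \emph{not} suffice: $M$ is then proportional to $2\mathbf I-\mathrm{Adj}(F)$, where $F$ is the forest obtained by deleting the root from the DST, and this is not positive semidefinite once $F$ has a vertex of degree four or more (e.g. a non-root DST node with enough children), so enlarging $\hat\alpha$ only makes things worse. The remedy is to let $\hat\alpha_{k}$ grow geometrically along the DST from the root toward the leaves: then $M=\Lambda^{1/2}(2\mathbf I-\tilde B)\Lambda^{1/2}$ with $\Lambda=\mathrm{diag}(\hat\alpha_{k})$ and $\tilde B$ symmetric carrying the \emph{small} entries $\sqrt{\hat\alpha_{p}/\hat\alpha_{q}}$ on the DST edges; a sufficiently large common ratio makes $2\mathbf I-\tilde B\succ0$, hence $M\succ0$, and scaling all $\hat\alpha_{k}$ by a large common factor then makes $M$ dominate the fixed matrix $\eta\mathbf I-\tilde Q-\tilde Q^{T}$. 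With these choices $\dot V\le-\theta\,\bar d^{T}(\mathbf I\otimes P^{-1})\bar d\le-c\|\bar d\|^{2}$ for some $c>0$.

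It remains routine to close the argument: $V$ is non-increasing, hence bounded, so $\bar d$ and all $\bar a_{k+1,i_{k}}$ are bounded; integrating the bound on $\dot V$ gives $\bar d\in L_{2}$; since then $Q(t)$ and $\dot{\bar d}$ are bounded, $\|\bar d\|^{2}$ is uniformly continuous, and Barbalat's lemma yields $\bar d(t)\to0$; Lemma~\ref{lmaXi} then delivers the TVF of Definition~\ref{tvf}. The only genuine obstacle is the one highlighted above---making the residual quadratic form negative via a tree-adapted choice of the constants $\hat\alpha_{k}$; everything else is bookkeeping already prepared by the earlier lemmas.
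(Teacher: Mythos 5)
Your proposal follows the same route as the paper's proof: reduce via Lemmas~\ref{lmaXi} and~\ref{lmatvf} to showing $\bar d\to 0$ for the homogeneous edge-error dynamics, take the Lyapunov function $V=\bar d^{T}(\textbf{I}_{N-1}\otimes P^{-1})\bar d+\sum_{k}\rho_{k+1,i_k}^{-1}(\bar a_{k+1,i_k}-\hat\alpha_k)^{2}$, cancel $\bar Q(t)$ in the drift exactly against the $\bar a$-proportional part of the adaptation term, and absorb the remaining state-matrix term with the LMI (\ref{mainlmi}); your matrix $M$ is precisely the paper's $\Phi+\Phi^{T}$ in (\ref{Phis}). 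The one point of genuine divergence is how the constants are chosen so that $\tilde Q+\tilde Q^{T}+M\succeq\eta\,\textbf{I}_{N-1}$: the paper grows $\Phi+\Phi^{T}$ by leading principal minors, picking each $\phi_{k+1,i_k}$ recursively from a Schur-complement bound, whereas you factor $M=\Lambda^{1/2}(2\textbf{I}-\tilde B)\Lambda^{1/2}$ and let the edge weights grow geometrically with depth in the DST so that $\|\tilde B\|<2$; both devices amount to the same thing (weights increasing toward the leaves), and your explicit observation that \emph{equal} weights fail at a node with many children is left implicit in the paper. Your closing step via $L_{2}$-boundedness and Barbalat's lemma is interchangeable with the paper's LaSalle argument on the autonomous extended state $(\bar d,\{\bar a_{k+1,i_k}\})$.
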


 \begin{proof}
 The feasibility condition (\ref{fescon}) guarantees that (\ref{fes}) holds $\forall k\in\mathcal{I}_{N-1}$. Moreover,
 \begin{align}\label{dtdot2}
  \dot{\bar{d}}=&(\textbf{I}_{N-1}\otimes(A+BK_0+BK_1)+Q(t)\otimes BK_2)\bar{d},
\end{align}
where $Q(t)$ is defined as in Lemma \ref{lmatvf} based on $\bar{\mathcal{G}}$. In the following, it will be proved that the designed controller guarantees $\lim_{t\rightarrow\infty}\bar{d}(t)=0$. As such, the proof of the theorem will be complete according to Lemma \ref{lmatvf}.

Consider the Lyapunov candidate
\begin{align}\label{V1}
  V_1(t)=\frac{1}{2}\bar{d}^T(&\textbf{I}_{N-1}\otimes P^{-1})\bar{d}  \nonumber\\
                       & +\sum_{k=1}^{N-1}\frac{1}{2\rho_{k+1,i_k}}(\bar{a}_{k+1,i_k}(t)-\phi_{k+1,i_k})^2
\end{align}
where $P$ is a solution to (\ref{mainlmi}) and $\phi_{k+1,i_k}\in\mathbb{R}^{+}$, $k\in\mathcal{I}_{N-1}$ are to be decided later.

By (\ref{dtdot2}) and (\ref{alphabardot}), the derivative of $V_1$ is
\begin{align}\label{v1d1}
  \dot{V}_1 =&\bar{d}^T(\textbf{I}_{N-1}\otimes P^{-1}(A+BK_0+BK_1)  \nonumber\\
  &\qquad\qquad\qquad\qquad+Q(t)\otimes P^{-1}BK_2)\bar{d}      \nonumber\\
  +&\sum_{k=1}^{N-1}(\bar{a}_{k+1,i_k}-\phi_{k+1,i_k})(\bar{d}_k-\sum_{j+1\in\bar{\mathcal{N}}_{2}(k+1)}\bar{d}_j)^T\Gamma\bar{d}_k.
\end{align}

Based on Lemma \ref{lmaQ}, one has
\begin{align}\label{abqb}
  &\sum_{k=1}^{N-1}\bar{a}_{k+1,i_k}(\bar{d}_k-\sum_{j+1\in\bar{\mathcal{N}}_{2}(k+1)}\bar{d}_j)^T\Gamma\bar{d}_k    \nonumber\\
 =&\sum_{k=1}^{N-1}(\bar{Q}_{kk}(t)\bar{d}_k+\sum_{j=1,j\neq k}^{N-1}\bar{Q}_{jk}(t)\bar{d}_j)^T\Gamma\bar{d}_k      \nonumber\\
 =&\sum_{k=1}^{N-1}\sum_{j=1}^{N-1}\bar{Q}_{jk}(t)\bar{d}_j^T\Gamma\bar{d}_k
\end{align}

Let us define $\Phi\in\mathbb{R}^{(N-1)\times(N-1)}$ as
\begin{align}\label{Phi}
 \Phi_{kj}=\left\{
                   \begin{array}{ll}
                     \phi_{j+1,i_j}, &\text{if}\quad j=k,  \\
                     -\phi_{j+1,i_j}, &\text{if}\quad j=i_k-1,  \\ 
                     0, & \text{otherwise}.
                   \end{array}
                 \right.
\end{align}
Then, it follows from (\ref{v1d1})-(\ref{Phi}) that
 \begin{align}\label{v1d2}
  \dot{V}_1 =&\bar{d}^T(\textbf{I}_{N-1}\otimes P^{-1}(A+BK_0+BK_1)  \nonumber\\
  &\qquad\qquad\qquad\qquad+Q(t)\otimes P^{-1}BK_2)\bar{d}      \nonumber\\
  &+\sum_{k=1}^{N-1}\sum_{j=1}^{N-1}(\bar{Q}_{jk}(t)-\Phi_{jk})\bar{d}_j^T\Gamma\bar{d}_k    \nonumber\\
  =&\bar{d}^T(\textbf{I}_{N-1}\otimes P^{-1}(A+BK_0+BK_1)  \nonumber\\
  &\qquad\qquad\qquad\qquad+Q(t)\otimes P^{-1}BK_2)\bar{d}      \nonumber\\
  &+\bar{d}^T((\bar{Q}(t)-\Phi)\otimes\Gamma)\bar{d}.
\end{align}

Define $\tilde{d}=(\textbf{I}_{N-1}\otimes P^{-1})\bar{d}$, and substitute $K_2,\Gamma$ designed in Algorithm \ref{tvfalg} into (\ref{v1d2}). Then, one has
 \begin{align}\label{v1d3}
  \dot{V}_1=&\tilde{d}^T(\textbf{I}_{N-1}\otimes (A+BK_0+BK_1)P  \nonumber\\
  &\qquad\qquad\qquad\qquad-Q(t)\otimes BB^T)\tilde{d}      \nonumber\\
  &+\tilde{d}^T((\bar{Q}(t)-\Phi)\otimes BB^T)\tilde{d}       \nonumber\\
  =&\tilde{d}^T(\textbf{I}_{N-1}\otimes (A+BK_0+BK_1)P)\tilde{d}  \nonumber\\
  &-\tilde{d}^T((\tilde{Q}+\Phi)\otimes BB^T)\tilde{d}     \nonumber\\
  =&\frac{1}{2}\tilde{d}^T\Big(\textbf{I}_{N-1}\otimes\big((A+BK_0+BK_1)P     \nonumber\\
  &\qquad\qquad\qquad+P(A+BK_0+BK_1)^T\big)  \nonumber\\
  &-(\tilde{Q}+\tilde{Q}^T+\Phi+\Phi^T)\otimes BB^T\Big)\tilde{d}.
\end{align}
Now we show that by appropriately selecting $\phi_{k+1,i_k}$, $k\in\mathcal{I}_{N-1}$, it can be fulfilled that
\begin{align}\label{Phis}
  &\Phi+\Phi^T=  \nonumber\\
            &\left(
                \begin{array}{ccccc}
                  2\phi_{2,i_1} & \phi_{21} & \cdots & \phi_{N-2,1} & \phi_{N-1,1} \\
                  \phi_{21} & 2\phi_{3,i_2} & \cdots & \cdots & \phi_{N-1,2} \\
                  \vdots & \vdots & \ddots & \vdots & \vdots \\
                  \phi_{N-2,1} & \vdots & \cdots & 2\phi_{N-1,i_{N-2}} & \phi_{N-1,N-2} \\
                  \phi_{N-1,1} & \phi_{N-1,2} & \cdots & \phi_{N-1,N-2} & 2\phi_{N,i_{N-1}} \\
                \end{array}
              \right)
\end{align}
is positive definite. To see this, let us denote $\Psi_1=\left(
                                                           \begin{array}{c}
                                                             2\phi_{2,i_1} \\
                                                           \end{array}
                                                         \right)
$ and $\Psi_k=\left(
                                                                            \begin{array}{cc}
                                                                              \Psi_{k-1} & \varphi_k \\
                                                                              \varphi_k^T & 2\phi_{k+1,i_k} \\
                                                                            \end{array}
                                                                          \right)$,
where $\varphi_k=(\phi_{k1},\phi_{k2},\cdots,\phi_{k,k-1})^T$, $k=2,\cdots,N-1$. Clearly, $\Psi_1>0$ by choosing $\phi_{2,i_1}>0$. Now suppose $\Psi_{k-1}>0$, $k\geq2$. Note that $|\phi_{kj}|\leq|\phi_{j+1,i_j}|$, $\forall j\in\mathcal{I}_{k-1}$. Then, one has $\varphi_k^T\Psi_{k-1}^{-1}\varphi_k\leq\lambda_{\text{M}}(\Psi_{k-1}^{-1})\sum_{j=2}^k\phi_{j,i_{j-1}}^2$. By choosing $\phi_{k+1,i_k}>\frac{\sum_{j=2}^k\phi_{j,i_{j-1}}^2}{2\lambda_{\text{m}}(\Psi_{k-1})}$, one has $\Psi_k>0$ according to the Schur complement \cite[Chapter 2.1]{boyd1994linear}. By mathematical induction, $\Phi+\Phi^T=\Psi_{N-1}$ is positive definite.

Moreover, since $\tilde{Q}$ is fixed, one can always choose sufficiently large $\phi_{k+1,i_k}$, $k\in\mathcal{I}_{N-1}$, such that $\lambda_{\text{m}}(\tilde{Q}+\tilde{Q}^T+\Phi+\Phi^T)\geq\eta$ where $\eta$ is defined in (\ref{mainlmi}). Then, it follows from (\ref{v1d3}) and (\ref{mainlmi}) that
 \begin{align}\label{v1d4}
  \dot{V}_1\leq&\frac{1}{2}\tilde{d}^T\Big(\textbf{I}_{N-1}\otimes\big((A+BK_0+BK_1)P     \nonumber\\
  &\qquad+P(A+BK_0+BK_1)^T-\eta BB^T\big)\Big)\tilde{d}   \nonumber\\
  \leq&-\frac{\theta}{2}\tilde{d}^T(\textbf{I}_{N-1}\otimes P)\tilde{d}
  =-\frac{\theta}{2}\bar{d}^T(\textbf{I}_{N-1}\otimes P^{-1})\bar{d}
  \leq0
\end{align}
which implies that the signals $\bar{d}(t)$ and $\bar{a}_{k+1,i_k}(t)$ in $V_1(t)$ are bounded. Note that $\dot{V}_1(t)=0$ implies that $\bar{d}=0$, thus by LaSalle's invariance principle, one has $\lim_{t\rightarrow\infty}\bar{d}(t)=0$. This completes the proof. 
 \end{proof}

  \begin{remark}\label{lmifes}
   The LMI (\ref{mainlmi}) is feasible for some $P>0$ if and only if $(A+BK_0+BK_1,B)$ is stabilizable, which can be realized since $(A,B)$ is stabilizable. Note that different formation vectors $h(\cdot)$ might lead to different solutions $P,K_0,K_1$.
  \end{remark}

 \begin{remark}\label{dstvsnode}
   In state-of-the-art TVF, the number of feasibility conditions is of the order $\frac{N(N-1)}{2}$ (i.e., one condition for each pair of connected agents) \cite{dong2014formation,wang2017distributed}. The proposed number of feasibility conditions in (\ref{fescon}) is $N-1$, i.e., exploiting the DST structure leads to the minimum number of conditions: note that $N-1$ is the minimum number of edges such that $\mathcal{G}$ is weakly-connected.
 \end{remark}

\section{DST-Based Distributed Adaptive TVFT}\label{tvftsec}
In this section, we propose a novel generalized DST-based adaptive controller to solve the TVFT problem of Definition \ref{tvft}. We address the general case with multiple leaders, and give a corollary for the special case with a single leader.
\begin{definition}\label{gdst}
   The digraph $\mathcal{G}$ is said to have a generalized DST rooting at the leadership, if the followers are either well-informed or uninformed, and for each uninformed follower, there exists at least one well-informed follower that has a directed path to it.
\end{definition}

\begin{assumption}\label{dstml}
  The digraph $\mathcal{G}$ has at least one generalized DST rooting at the leadership.
\end{assumption}

\begin{remark}
  TVFT with multiple leaders is also considered in \cite{dong2017time,hu2020distributed}, where it is required that the coupling weights from any leader to different well-informed followers are identical and known a priori. Assumption \ref{dstml} relaxes that requirement.
\end{remark}

\subsection{Auxiliary system, technical lemma and control law}
Let us introduce an auxiliary multi-agent system with an induced communication graph $\mathcal{G}'(\mathcal{V}',\mathcal{E}',\mathcal{A}')$. Define $\mathcal{V}'=\mathcal{I}_{N-M+1}$ where the agent with index $1$ is the leader and $\mathcal{E}'=\{e'_{1j},j>1|j+M-1$ is well-informed in $\mathcal{G}\}\bigcup\{e'_{jp},j,p>1|e_{j+M-1,p+M-1}\in\mathcal{E}\}$. The adjacency matrix $\mathcal{A}'=(a'_{jp})$ where $a'_{jp}>0$ if $e'_{pj}\in\mathcal{E}'$, and $a'_{jp}=0$ otherwise.

To clarify Assumption \ref{dstml} and the induced graph $\mathcal{G}'$, see Fig.~\ref{topex3}. It is clear that the multiple leaders are merged as a single joint leader in $\mathcal{G}'$.
 \begin{figure}[thb!]
  \centering
  \includegraphics[width=0.50\textwidth]{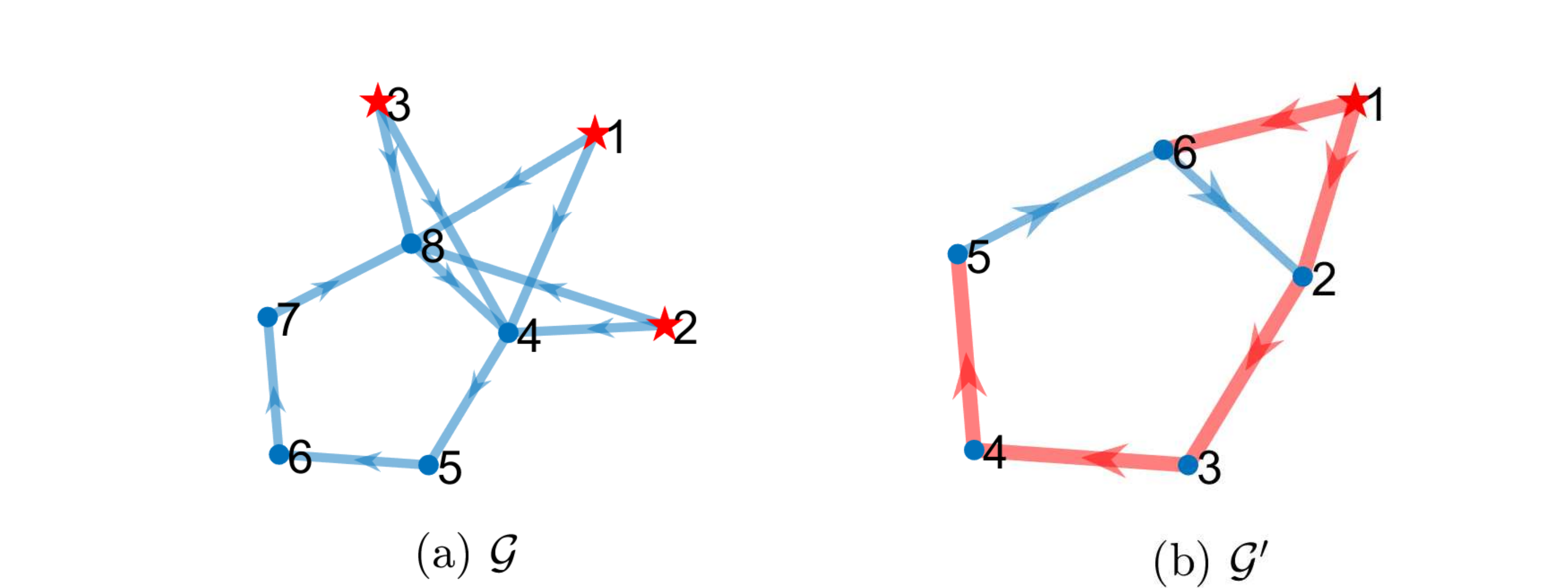}\\
  \caption{A communication graph $\mathcal{G}$ with three leaders (with indexes $1,2,3$) which satisfies Assumption \ref{dstml}, and the induced graph $\mathcal{G}'$ with a single leader (with index 1).}\label{topex3}
 \end{figure}

In the auxiliary multi-agent system, let $y_j$ and $v_j$ be the state and control input of agent $j$. For the leader, define $y_1=\sum_{l=1}^M\beta_lx_{l}$ and $h'_1\equiv0$. For the followers, define $y_j=x_{j+M-1}$, $h'_j=h_{j+M-1}$, for $j=2,\cdots,N-M+1$. Let $d'_j=y_j-h'_j$, $j\in\mathcal{I}_{N-M+1}$. Then, the dynamics of $y_j$ satisfies  %
\begin{align}\label{ydot}
  & \dot{y}_1=Ay_1,   \nonumber\\
  & \dot{y}_j=Ay_j+Bv_j  \quad j=2,\cdots,N-M+1,
\end{align}
where $v_j=u_{j+M-1}$, and the initial state values are determined by those of multi-agent system (\ref{leaders}).

\begin{lemma}[N-S condition for TVFT]\label{tvftmllma}
 Under Assumption \ref{dstml}, the multi-agent system (\ref{leaders}) achieves the TVFT with multiple leaders defined by $h^F(t)=\text{col}(h_{M+1}(t),$ $h_{M+2}(t),\cdots,h_N(t))$ and by $\beta_l$, $l\in\mathcal{I}_M$, if and only if the auxiliary system (\ref{ydot}) achieves the TVFT defined by $h'_F(t)=\text{col}(h'_2(t),h'_3(t),\cdots,h'_{N-M+1}(t))$ with a single leader.
\end{lemma}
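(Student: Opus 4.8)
The plan is to establish Lemma~\ref{tvftmllma} as a change-of-variables (bookkeeping) equivalence rather than through any Lyapunov or graph-theoretic argument: the construction behind the auxiliary system~(\ref{ydot}) already defines a bijective correspondence between the trajectories of the closed-loop system~(\ref{leaders}) and those of~(\ref{ydot}), and under this correspondence the multi-leader condition~(\ref{tvftml}) and the single-leader condition (cf.~(\ref{tvftsl}), written for $\mathcal{G}'$) become the \emph{same} relation. It therefore suffices to keep careful track of states, inputs, initial conditions, and index ranges.

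First I would verify the dynamical consistency of the construction. Differentiating $y_1=\sum_{l=1}^{M}\beta_l x_l$ and using the leader dynamics in~(\ref{leaders}) gives $\dot{y}_1=\sum_{l=1}^{M}\beta_l Ax_l=A\sum_{l=1}^{M}\beta_l x_l=Ay_1$, i.e. the joint leader obeys the autonomous dynamics in~(\ref{ydot}); similarly $y_j=x_{j+M-1}$, $h'_j=h_{j+M-1}$ and $v_j=u_{j+M-1}$ give $\dot{y}_j=Ay_j+Bv_j$ for $j=2,\cdots,N-M+1$, with $h'_1\equiv0$ matching the absence of a leader offset. Hence the relabeling $v_j=u_{j+M-1}$ is a bijection between follower control inputs for~(\ref{leaders}) and for~(\ref{ydot}), under which the follower trajectories are identified via $y_j=x_{j+M-1}$ and the joint-leader trajectory is the fixed linear combination $y_1=\sum_{l=1}^{M}\beta_l x_l$. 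Moreover the induced map $x(0)\mapsto y(0)$ is onto $\mathbb{R}^{n(N-M+1)}$: since every $\beta_l>0$, any prescribed $y_1(0)$ is realized, e.g. by $x_1(0)=\beta_1^{-1}y_1(0)$ and $x_l(0)=0$ for $l\geq2$, while the follower components $y_j(0)=x_{j+M-1}(0)$ are already free. Thus the phrase ``for any initial states'' ranges over the same family of auxiliary trajectories in both problems.

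Next I would rewrite the two convergence conditions. For $j=2,\cdots,N-M+1$, substituting the definitions gives
\[
 y_j-h'_j-y_1 \;=\; x_{j+M-1}-h_{j+M-1}-\sum_{l=1}^{M}\beta_l x_l ,
\]
and as $j$ runs over $\{2,\cdots,N-M+1\}$ the index $i:=j+M-1$ runs over $\{M+1,\cdots,N\}=\mathcal{I}_N\setminus\mathcal{I}_M$. Consequently, the single-leader TVFT condition for~(\ref{ydot}) (that is, (\ref{tvftsl}) with $N,x,h$ replaced by $N-M+1,y,h'$) holds along a trajectory of~(\ref{ydot}) if and only if the multi-leader condition~(\ref{tvftml}) holds along the corresponding trajectory of~(\ref{leaders}). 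Combining this with the trajectory/initial-condition correspondence of the previous step, and invoking Definition~\ref{tvft}, yields exactly the ``if and only if'' of the lemma.

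I do not anticipate a substantive obstacle: the statement is essentially a definitional reduction from the multi-leader to the single-leader setting. The only points deserving care are (i) checking that the trajectory and initial-state correspondence is genuinely a bijection in both directions — this is precisely where positivity of the $\beta_l$ is used, so that the universal quantifier over initial states transfers from~(\ref{leaders}) to~(\ref{ydot}) and back — and (ii) noting (for context, not for the equivalence itself) that Assumption~\ref{dstml} is what ensures the induced graph $\mathcal{G}'$ admits a directed spanning tree rooted at its single leader, so that the reduced problem is of the type handled by the DST-based machinery.
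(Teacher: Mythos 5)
Your proposal is correct and follows essentially the same route as the paper: the paper's own proof is a one-line substitution of the definitions $y_1=\sum_{l=1}^{M}\beta_l x_l$, $y_j=x_{j+M-1}$, $h'_j=h_{j+M-1}$ to show that the single-leader limit condition for (\ref{ydot}) and the multi-leader condition (\ref{tvftml}) for (\ref{leaders}) are the same relation under the index shift $i=j+M-1$. Your additional checks (that $y_1$ indeed satisfies $\dot{y}_1=Ay_1$, and that positivity of the $\beta_l$ makes the initial-state correspondence onto so the universal quantifier transfers) are correct refinements of details the paper leaves implicit.
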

\begin{proof}
  According to the definitions of $y_j$ and $h'_j$, it is obvious that $\lim_{t\rightarrow\infty}(y_j(t)-h'_j(t)-y_1(t))=0$, $j=2,\cdots,N-M+1$, is equivalent to $\lim_{t\rightarrow\infty}(x_i(t)-h_i(t)-\sum_{l=1}^M\beta_lx_{l}(t))=0$, $\forall i\in\mathcal{I}_N\setminus\mathcal{I}_M$.
\end{proof}

Under Assumption \ref{dstml}, there is at least one DST in $\mathcal{G}'$ rooting at the leader. Then, one can choose such a DST $\hat{\mathcal{G}}'(\mathcal{V}',\hat{\mathcal{E}}',\hat{\mathcal{A}}')$. Let $j_k$ denote the unique parent of node $k+1$ in $\hat{\mathcal{G}}'$ for $k\in\mathcal{I}_{N-M}$. Let $\mathcal{N}'_{1}(j)$ be the set of in-neighbors of $j$ in $\mathcal{G}'$ and $\hat{\mathcal{N}}'_{\text{2}}(j)$ be the set of out-neighbors of $j$ in $\hat{\mathcal{G}}'$.

The generalized DST-based distributed adaptive TVFT controller for follower $i$ of (\ref{leaders}), $i\in\mathcal{I}_N\setminus\mathcal{I}_M$, is proposed as:
 \begin{align}
   \label{tvftmlui}
   &u_i=v_{i-M+1},   \\
   \label{tvftmlapij}
   &v_j=K_0h'_j+K_2\sum_{p\in\mathcal{N}'_{1}(j)}\alpha'_{jp}(t)(d'_j-d'_p),  \\
   &\alpha'_{jp}(t)=\left\{
                    \begin{array}{ll}
                      a'_{jp}, & \text{if}\quad e_{pj}\in\mathcal{E}'\setminus\hat{\mathcal{E}}', \\
                      \hat{a}'_{k+1,j_k}(t), & \text{if}\quad  e_{pj}\in\hat{\mathcal{E}}'
                    \end{array}
                  \right.
 \end{align}
 \begin{align}\label{alphahatpdot}
   &\dot{\hat{a}}'_{k+1,j_k}=\rho_{k+1,j_k}\Big((d'_{j_k}-d'_{k+1})-\nonumber\\
   &\quad\qquad\qquad \sum\limits_{p\in\hat{\mathcal{N}}'_{2}(k+1)}(d'_{k+1}-d'_p)\Big)^T\Gamma(d'_{j_k}-d'_{k+1}).
 \end{align}

 In order to illustrate the idea of the auxiliary multi-agent system, the information flow of the closed-loop system $x_i$, $i\in\mathcal{I}_N\setminus\mathcal{I}_M$, is sketched in Fig.~\ref{cbd}. Instead of directly designing the controllers for multi-agent system (\ref{leaders}), an auxiliary multi-agent system is defined as in (\ref{ydot}), and some interaction between them is constructed: at stage (\ref{tvftmlapij}), each leader $x_l$ of (\ref{leaders}) broadcast its $\beta_l$-scaled state to the single leader of (\ref{ydot}), and each follower broadcast its state to the corresponding follower, respectively; at stage (\ref{tvftmlui}), each follower of (\ref{ydot}) responds to the corresponding follower of (\ref{leaders}) with its control input. Then, the original TVFT problem in (\ref{leaders}) is successfully transformed into the TVFT with a single leader in (\ref{ydot}). It should be pointed out that only the local information, i.e., the states of $x_s$, $s\in\mathcal{N}_1(i)$, are included in the loop of $x_i$ from Fig.~\ref{cbd}.
 \begin{figure}[thb!]
  \centering
  \includegraphics[width=0.32\textwidth]{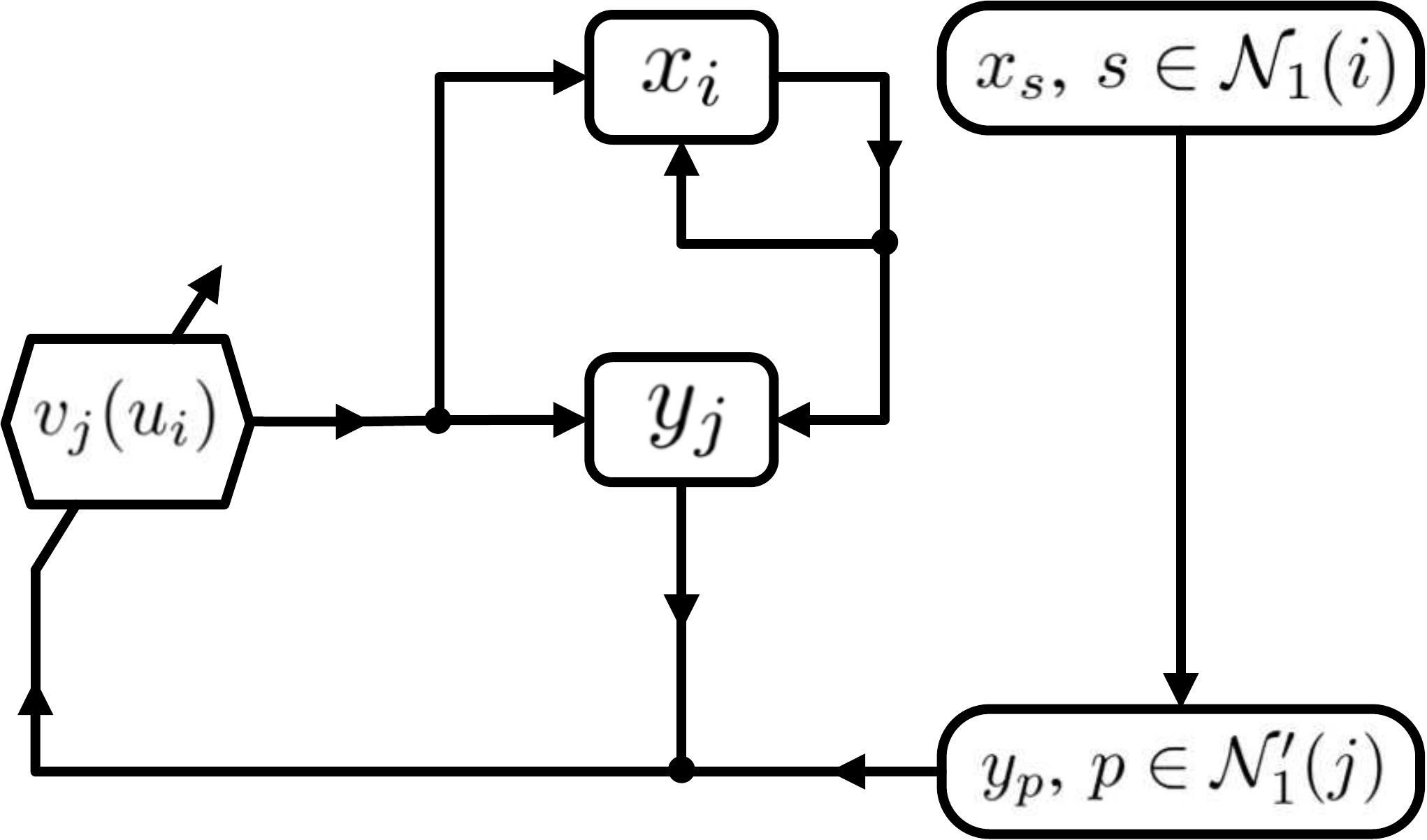}\\
  \caption{The information flow of the closed-loop system $x_i$, $i\in\mathcal{I}_N\setminus\mathcal{I}_M$.}\label{cbd}
 \end{figure}

\subsection{Main result}
The design process of the TVFT controller is summarized in Algorithm \ref{tvftalg}, and analyzed in the following theorem.
\begin{algorithm}\caption{TVFT Controller Design}\label{tvftalg}
  \begin{enumerate}
     \item Find a constant $K_0$ such that the formation tracking feasibility condition
     \begin{align}\label{fescon2}
       (A+BK_0)h_i(t)-\dot{h}_i(t)=0
     \end{align}
     holds $\forall i\in\mathcal{I}_N\setminus\mathcal{I}_M$. If such $K_0$ exists, continue; else, the algorithm terminates without solutions;
     \item\label{s2a3} Choose $\eta$, $\theta\in\mathbb{R}^+$, and solve the following LMI:
     \begin{align}\label{mainlmi2}       
        AP+PA^T-\eta BB^T+\theta P\leq0
     \end{align}
     to get a $P>0$;
    \item\label{s3a3} Set $K_2=-B^TP^{-1}$, $\Gamma=P^{-1}BB^TP^{-1}$ and choose scalars $\rho_{k+1,i_k}\in\mathbb{R}^+$.
  \end{enumerate}
 \end{algorithm}

 \begin{theorem}[Main result for TVFT]\label{tvftmlthrem}
   Under Assumption \ref{dstml}, and feasibility condition (\ref{fescon2}). The TVFT problem in Definition \ref{tvft} can be solved by controller (\ref{tvftmlui})-(\ref{alphahatpdot}) with $\rho_{k+1,j_k}\in\mathbb{R}^+$, and $K_2$, $\Gamma$ designed as in Algorithm \ref{tvftalg}.
 \end{theorem}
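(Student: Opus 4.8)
The plan is to reduce the multi-leader TVFT problem to a single-leader tracking problem on the auxiliary system (\ref{ydot}) via Lemma \ref{tvftmllma}, and then run an argument parallel to the proof of Theorem \ref{tvfthrem}. First I would note that, by Lemma \ref{tvftmllma}, it suffices to show $\lim_{t\to\infty}(d'_j(t)-d'_1(t))=0$ for $j=2,\dots,N-M+1$, where $d'_1=y_1$ (since $h'_1\equiv0$) evolves as $\dot{d}'_1=Ad'_1$. Introduce the tree-edge error vectors $\bar{d}'_k(t)=d'_{j_k}(t)-d'_{k+1}(t)$ for $k\in\mathcal{I}_{N-M}$ and stack them as $\bar{d}'$; by the incidence-matrix identity (the analogue of Lemma \ref{lmaXi} applied to $\hat{\mathcal{G}}'$, which is a genuine DST of $\mathcal{G}'$ under Assumption \ref{dstml}), it is enough to prove $\lim_{t\to\infty}\|\bar{d}'(t)\|=0$.

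Next I would derive the closed-loop dynamics of $\bar{d}'$. Plugging the control law (\ref{tvftmlapij}) into (\ref{ydot}) and using $v_j=K_0h'_j+K_2\sum_p\alpha'_{jp}(d'_j-d'_p)$ gives $\dot{y}_j=Ay_j+BK_0h'_j+BK_2\sum_p\alpha'_{jp}(d'_j-d'_p)$, hence $\dot{d}'_j=Ad'_j+(A+BK_0)h'_j-\dot{h}'_j+BK_2\sum_p\alpha'_{jp}(d'_j-d'_p)$. The feasibility condition (\ref{fescon2}) forces the middle term $(A+BK_0)h'_j-\dot{h}'_j$ to vanish for every follower $j$ (and it is identically zero for the leader since $h'_1\equiv0$), so the formation-offset forcing drops out exactly as in the TVF case. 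Applying the commutation identity $\Xi'\mathcal{L}'=Q'\Xi'$ of Lemma \ref{lmaQ} (restated for $\mathcal{G}'$ and $\hat{\mathcal{G}}'$), one obtains the clean linear time-varying system
\begin{align}
  \dot{\bar{d}}'=(\textbf{I}_{N-M}\otimes A+Q'(t)\otimes BK_2)\bar{d}',
\end{align}
with $Q'(t)=\tilde{Q}'+\bar{Q}'(t)$ exactly as in Lemma \ref{lmatvf}, except that $A+BK_0+BK_1$ is replaced by $A$ (there is no $K_1$ term here, and $K_0$ only enters the feasibility condition). Note that because every leader row of $\mathcal{L}'$ is zero while $d'_1$ contributes through $\mathcal{L}'_1$, the reduction to $\bar{d}'$ is where the single joint leader is effectively used.

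Finally I would run the Lyapunov/LaSalle argument verbatim from Theorem \ref{tvfthrem}. Take $V(t)=\tfrac12\bar{d}'^{\,T}(\textbf{I}_{N-M}\otimes P^{-1})\bar{d}'+\sum_{k}\tfrac{1}{2\rho_{k+1,j_k}}(\hat{a}'_{k+1,j_k}(t)-\phi_{k+1,j_k})^2$ with $P$ solving (\ref{mainlmi2}), differentiate along (\ref{alphahatpdot}), use Lemma \ref{lmaQ} to cancel the $\bar{Q}'$-coupling against the adaptive terms, substitute $K_2=-B^TP^{-1}$, $\Gamma=P^{-1}BB^TP^{-1}$, and choose the $\phi_{k+1,j_k}$ large enough (via the same Schur-complement induction on $\Psi_k$) so that $\Phi'+\Phi'^T>0$ and $\lambda_{\text{m}}(\tilde{Q}'+\tilde{Q}'^T+\Phi'+\Phi'^T)\ge\eta$; then (\ref{mainlmi2}) yields $\dot{V}\le-\tfrac{\theta}{2}\bar{d}'^{\,T}(\textbf{I}_{N-M}\otimes P^{-1})\bar{d}'\le0$. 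Boundedness of all signals and $\dot{V}=0\Rightarrow\bar{d}'=0$ then give $\lim_{t\to\infty}\bar{d}'(t)=0$ by LaSalle, and Lemma \ref{tvftmllma} closes the argument. The main obstacle I anticipate is purely bookkeeping: verifying that Lemmas \ref{lmaXi} and \ref{lmaQ} genuinely transfer to $\mathcal{G}'$ and its DST $\hat{\mathcal{G}}'$ — in particular that $\hat{\mathcal{G}}'$ really is a spanning tree rooted at the single leader (this is exactly the content of Definition \ref{gdst} and Assumption \ref{dstml}, since well-informed followers attach directly to the joint leader and every uninformed follower is reachable from some well-informed one) — and that the auxiliary-system construction faithfully preserves the information-locality, so the $v_j$ computed on $\mathcal{G}'$ depend only on states available to agent $i=j+M-1$ in $\mathcal{G}$.
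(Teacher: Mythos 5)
Your proposal is correct and follows essentially the same route as the paper: reduce to the single-leader auxiliary system via Lemma \ref{tvftmllma}, transfer Lemmas \ref{lmaXi} and \ref{lmaQ} to $\mathcal{G}'$ and its DST $\hat{\mathcal{G}}'$, cancel the formation forcing with (\ref{fescon2}) to obtain $\dot{\hat{d}}'=(\textbf{I}_{N-M}\otimes A+Q'(t)\otimes BK_2)\hat{d}'$, and then repeat the Lyapunov/LaSalle argument of Theorem \ref{tvfthrem} with $P$ from (\ref{mainlmi2}). In fact you spell out the Lyapunov steps in more detail than the paper, which simply writes ``following similar steps as in the proof of Theorem \ref{tvfthrem}.''
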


\begin{proof}
  The condition that (\ref{fescon2}) holds $\forall i\in\mathcal{I}_N\setminus\mathcal{I}_M$ is equivalent to $(A+BK_0)h'_j(t)-\dot{h}'_j(t)=0$, $\forall j\in\{2,\cdots,N-M+1\}$, which means that the TVFT defined by $h'_F=\text{col}(h'_2,h'_3,\cdots,h'_{N-M+1})$ is feasible for the auxiliary multi-agent system (\ref{ydot}). According to Lemma \ref{tvftmllma}, it remains to show that (\ref{tvftmlapij})-(\ref{alphahatpdot}) solves the TVFT for multi-agent system (\ref{ydot}) defined by $h'_F$ with a single leader.

  Extensions of Lemma \ref{lmaXi} and Lemma \ref{lmaQ} apply to $\mathcal{G}'$ and $\hat{\mathcal{G}}'$, and are not repeated for compactness. Let $h'=\text{col}(h'_1,h'_F)$ $\hat{d}'_k(t)=d'_{i_k}(t)-d'_{k+1}(t)$ be the error vector between the parent and the child nodes of the directed edge $\hat{e}'_{i_k,k+1}$, $k\in\mathcal{I}_{N-M}$, and denote $\hat{d}'=\text{col}(\hat{d}'_1,\cdots,\hat{d}'_{N-M+1})$. Then $\hat{d}'=(\Xi'\otimes\textbf{I}_n)d'$. Let $Q'(t)=\tilde{Q}'+\hat{Q}'(t)$ where $\Xi'$ and $\tilde{Q}'$ is defined as in Lemma \ref{lmaXi} and \ref{Q}, respectively, based on $\hat{\mathcal{G}}'$ and
 \begin{align}\label{tvcfqb}
      \hat{Q}'_{kj}(t)=\left\{
                   \begin{array}{ll}
                     \hat{a}'_{j+1,i_j}(t), & \text{if}\quad  j=k,  \\
                     -\hat{a}'_{j+1,i_j}(t), & \text{if}\quad  j=i_k-1,  \\ 
                     0, & \text{otherwise}.
                   \end{array}
                 \right.
 \end{align} where the time-varying weights are defined in (\ref{tvftmlapij}).

 With (\ref{tvftmlui}), the closed-loop state dynamics of the leader-following multi-agent system (\ref{ydot}) can be obtained as
  \begin{align}\label{xdot2}
  \dot{y}=&(\textbf{I}_{N-M+1}\otimes A)y+(\mathcal{L}'(t)\otimes BK_2)d'    \nonumber\\
                  &+(\textbf{I}_{N-M+1}\otimes BK_0)h'.
 \end{align}
 Then, it follows from (\ref{xdot2}) and the definitions of $d$ and $\hat{d}$ that
 \begin{align}\label{dhdot}
  \dot{\hat{d}}'=&(\textbf{I}_{N-M}\otimes A)\hat{d}'+(\Xi'\mathcal{L}'(t)\otimes BK_2)d'    \nonumber\\
                  &+(\Xi'\otimes(A+BK_0))h'-(\Xi'\otimes \textbf{I}_n)\dot{h}'      \nonumber\\
                =&(\textbf{I}_{N-M}\otimes A+Q'(t)\otimes BK_2)\hat{d}'    \nonumber\\
                  &+(\Xi'\otimes(A+BK_0))h'-(\Xi\otimes \textbf{I}_n)\dot{h}'
\end{align}
where $\mathcal{L}'(t)$ is the time-varying Laplacian matrix of $\mathcal{G}'(t)$. Under the feasibility condition (\ref{fescon2}), one has
 \begin{align}\label{dhdot2}
  \dot{\hat{d}}'=(\textbf{I}_{N-M}\otimes A+Q'(t)\otimes BK_2)\hat{d}'.
\end{align}

Consider the Lyapunov candidate as
\begin{align}\label{V2}
  V_2(t)=\frac{1}{2}\hat{d}'^T(&\textbf{I}_{N-M}\otimes P^{-1})\hat{d}'  \nonumber\\
                       & +\sum_{k=1}^{N-M}\frac{1}{2\rho_{k+1,i_k}}(\hat{a}'_{k+1,i_k}(t)-\delta_{k+1,i_k})^2
\end{align}
where $P$ is a solution of (\ref{mainlmi2}) and $\delta_{k+1,i_k}\in\mathbb{R}^{+}$, $k\in\mathcal{I}_{N-M}$. Following similar steps as in the proof of Theorem \ref{tvfthrem}, one has $\lim_{t\rightarrow\infty}\hat{d}'(t)=0$. In this case, the TVFT with a single leader is realized in (\ref{ydot}), meanwhile, the TVFT with multiple leaders is realized in (\ref{leaders}). This completes the proof.
\end{proof}

In the special case when $M=1$, the auxiliary multi-agent system (\ref{ydot}) coincides with the original one, thus, it can be removed. The DST-based adaptive TVFT controller can be directly designed for follower $i$, $i=2,\cdots,N$, as:
 \begin{align}\label{tvftui}
   &u_i=K_0h_i+K_2\sum_{j\in\mathcal{N}_{1}(i)}\alpha_{ij}(t)(d_i-d_j)  \\
   \label{tvftaijui}
   &\alpha_{ij}(t)=\left\{
                    \begin{array}{ll}
                      a_{ij}, & \text{if}\quad e_{ji}\in\mathcal{E}\setminus\hat{\mathcal{E}}, \\
                      \hat{a}_{k+1,i_k}(t), &  \text{if}\quad e_{ji}\in\hat{\mathcal{E}}
                    \end{array}
                  \right.
 \end{align}
 and adaptive laws $\dot{\hat{a}}_{k+1,i_k}$ as in (\ref{alphabardot}). Here, $d_1(t)=x_1(t)$. Immediately, we have the following corollary.
\begin{corollary}[Single leader case]
    Suppose there exists a DST $\hat{\mathcal{G}}$ rooting at the leader. Under feasibility condition (\ref{fescon2}), the TVFT with a single leader is solved by (\ref{tvftui})-(\ref{tvftaijui}) and $\dot{\hat{a}}_{k+1,i_k}$ as in (\ref{alphabardot}), along the designs in Algorithm \ref{tvftalg}.
\end{corollary}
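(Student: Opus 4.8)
The plan is to read the statement as the $M=1$ specialization of Theorem~\ref{tvftmlthrem}: with $M=1$ the constraint $\sum_{l=1}^{M}\beta_l=1$ forces $\beta_1=1$, so $y_1=x_1$, $h'_1\equiv0$, the auxiliary system~(\ref{ydot}) collapses onto the original system~(\ref{leaders}), and the index shift $j\mapsto j+M-1$ is the identity; hence (\ref{tvftui})--(\ref{tvftaijui}) with $\dot{\hat{a}}_{k+1,i_k}$ as in~(\ref{alphabardot}) is exactly the controller (\ref{tvftmlapij})--(\ref{alphahatpdot}) and the designs of Algorithm~\ref{tvftalg} carry over verbatim. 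To make the argument self-contained I would work directly on $\mathcal{G}$ and $\hat{\mathcal{G}}$. Adopt the convention $d_1(t)=x_1(t)$, equivalently $h_1\equiv0$, and $d_i=x_i-h_i$ for the followers, and stack $d=\text{col}(d_1,\ldots,d_N)$. Since $\hat{\mathcal{G}}$ is a DST on all $N$ nodes with root $1$, the matrix $\Xi$ of Lemma~\ref{lmaXi} built from $\hat{\mathcal{G}}$ is available, and $\|(\Xi\otimes\textbf{I}_n)d(t)\|\to0$ is equivalent to $d_i(t)-d_j(t)\to0$ for all $i,j$, in particular to $x_i(t)-h_i(t)-x_1(t)\to0$ for every follower, i.e.\ to~(\ref{tvftsl}). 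Writing $\hat{d}=(\Xi\otimes\textbf{I}_n)d$, it therefore suffices to show $\hat{d}(t)\to0$.

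Next I would obtain the closed-loop dynamics of $\hat{d}$. From~(\ref{leaders}) and~(\ref{tvftui}), a componentwise computation gives $\dot{d}_i=Ad_i+\big((A+BK_0)h_i-\dot{h}_i\big)+BK_2\sum_{j\in\mathcal{N}_{1}(i)}\alpha_{ij}(t)(d_i-d_j)$ for the followers and $\dot{d}_1=Ad_1$ for the leader; since the feasibility condition~(\ref{fescon2}) together with $h_1\equiv0$ makes every bracket $(A+BK_0)h_i-\dot{h}_i$ vanish, stacking yields $\dot{d}=(\textbf{I}_N\otimes A)d+(\mathcal{L}(t)\otimes BK_2)d$ with $\mathcal{L}(t)$ the (adaptively) time-varying Laplacian of $\mathcal{G}$. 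Left-multiplying by $\Xi\otimes\textbf{I}_n$ and using the extension of Lemma~\ref{lmaQ}, i.e.\ $\Xi\mathcal{L}(t)=Q(t)\Xi$ with $Q(t)=\tilde{Q}+\hat{Q}(t)$ the auxiliary matrix of $\hat{\mathcal{G}}$, gives $\dot{\hat{d}}=(\textbf{I}_{N-1}\otimes A+Q(t)\otimes BK_2)\hat{d}$; this is precisely~(\ref{dhdot2})/(\ref{dtdot2}) with $A+BK_0+BK_1$ replaced by $A$, which is consistent because in the leader--follower setting the leader anchors the average formation signal and the gain $K_1$ is no longer needed.

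Finally I would close the loop by a Lyapunov/LaSalle argument parallel to the proof of Theorem~\ref{tvfthrem}, with $V(t)=\tfrac{1}{2}\hat{d}^T(\textbf{I}_{N-1}\otimes P^{-1})\hat{d}+\sum_{k=1}^{N-1}\tfrac{1}{2\rho_{k+1,i_k}}(\hat{a}_{k+1,i_k}(t)-\delta_{k+1,i_k})^2$, where $P$ solves~(\ref{mainlmi2}) and the $\delta_{k+1,i_k}\in\mathbb{R}^+$ are to be fixed. Differentiating along $\dot{\hat{d}}=(\textbf{I}_{N-1}\otimes A+Q(t)\otimes BK_2)\hat{d}$ and the adaptive law~(\ref{alphabardot}), and using the explicit tridiagonal-like structure of $\hat{Q}(t)$ (the analogue of~(\ref{qbar})) exactly as in~(\ref{abqb}), the adaptation terms cancel the $\hat{Q}(t)$ part of the dynamics and leave $\dot{V}$ expressed through $\tilde{Q}$ and a fixed matrix $\Delta$ assembled from the $\delta_{k+1,i_k}$ in the same way as $\Phi$ in~(\ref{Phi}). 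Then, by the same Schur-complement induction that establishes~(\ref{Phis}), one picks the $\delta_{k+1,i_k}$ large enough, processed along the tree, so that $\Delta+\Delta^T>0$ and $\lambda_{\text{m}}(\tilde{Q}+\tilde{Q}^T+\Delta+\Delta^T)\geq\eta$; substituting $K_2=-B^TP^{-1}$, $\Gamma=P^{-1}BB^TP^{-1}$ and applying the LMI~(\ref{mainlmi2}) gives $\dot{V}\leq-\tfrac{\theta}{2}\hat{d}^T(\textbf{I}_{N-1}\otimes P^{-1})\hat{d}\leq0$. Hence $\hat{d}(t)$ and the $\hat{a}_{k+1,i_k}(t)$ are bounded, $\dot{V}=0$ implies $\hat{d}=0$, and LaSalle's invariance principle yields $\hat{d}(t)\to0$, which by the first step completes the proof.

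The main point that needs care is structural rather than analytic: one must check that Lemmas~\ref{lmaXi} and~\ref{lmaQ} apply to $\hat{\mathcal{G}}$ verbatim once the leader is folded in through the convention $d_1=x_1$, $h_1\equiv0$ (so that the root contributes trivially to both $\Xi$ and $Q$), and that dropping the gain $K_1$ is harmless, i.e.\ that the reduced LMI~(\ref{mainlmi2}) still admits $P>0$ for suitable $\eta,\theta\in\mathbb{R}^+$ --- which holds since $(A,B)$ is stabilizable, in analogy with Remark~\ref{lmifes}. Everything else is a transcription of the proofs of Theorems~\ref{tvfthrem} and~\ref{tvftmlthrem}.
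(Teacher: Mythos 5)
Your proposal is correct and follows essentially the same route as the paper: the paper states the corollary as an immediate consequence of Theorem~\ref{tvftmlthrem}, since for $M=1$ the auxiliary system (\ref{ydot}) coincides with (\ref{leaders}), which is exactly your first paragraph. The remainder of your argument is a faithful, self-contained transcription of the Lyapunov/LaSalle machinery already used in the proofs of Theorems~\ref{tvfthrem} and~\ref{tvftmlthrem}, so no new ideas or gaps arise.
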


\begin{remark}
    With a single leader, Assumption \ref{dstml} degenerates to the standard assumption of existence of a DST rooting at the leader (\cite{xiao2009finite,dong2018time}, etc). The benefit of Theorem \ref{tvftmlthrem} is thus to provide a natural unifying framework for the DST adaptive method in the presence of one or more leaders.
\end{remark}

 \begin{remark}\label{k0k2}
   The TVFT problem with a single leader can be seen as a special type of the TVF problem where $h_1(\cdot)\equiv0$ for the leader. By comparing (\ref{tvfui}) with (\ref{tvftui}), it can be seen that $K_1=-K_0$ in (\ref{tvftui}). This means that there is no separate term for the average formation signal, since the formation reference is known a prior as the of leader's trajectory.
\end{remark}


\section{Numerical examples}\label{exsec}
In this section, three numerical examples for TVF, TVFT with three leaders and with a single leader are implemented to validate the theoretical results. In all three examples, the initial positions of the agents (followers) are chosen from a Gaussian distribution with standard deviation $5$, and the initial coupling weights of the edges are chosen from a uniform distribution in the interval $(0,0.1)$.
 \begin{example}[TVF]\label{tvfex}
 \begin{figure}[thb!]
  \centering
  \includegraphics[width=0.50\textwidth]{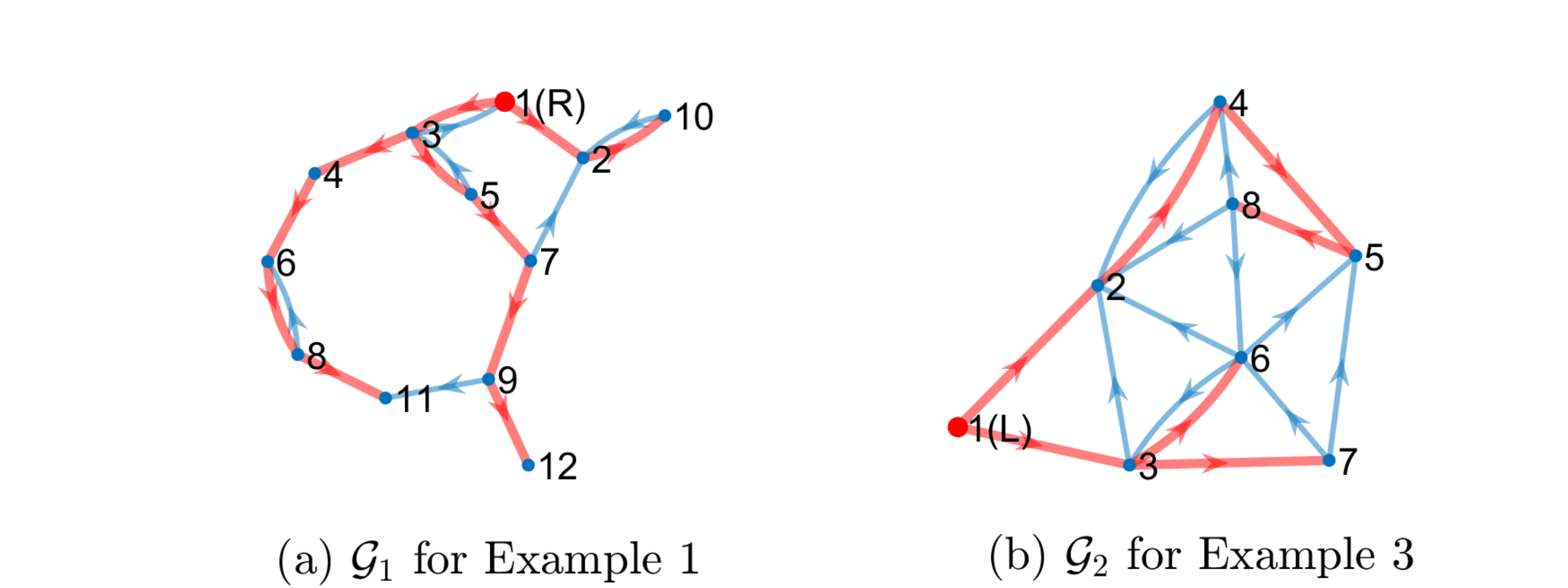}\\
  \caption{Communication graphs. The DSTs are highlighted with red color, and (R), (L) are the root and leader nodes.}\label{tops}
 \end{figure}
    \begin{figure}[thb!]
  \centering
  \includegraphics[width=0.40\textwidth]{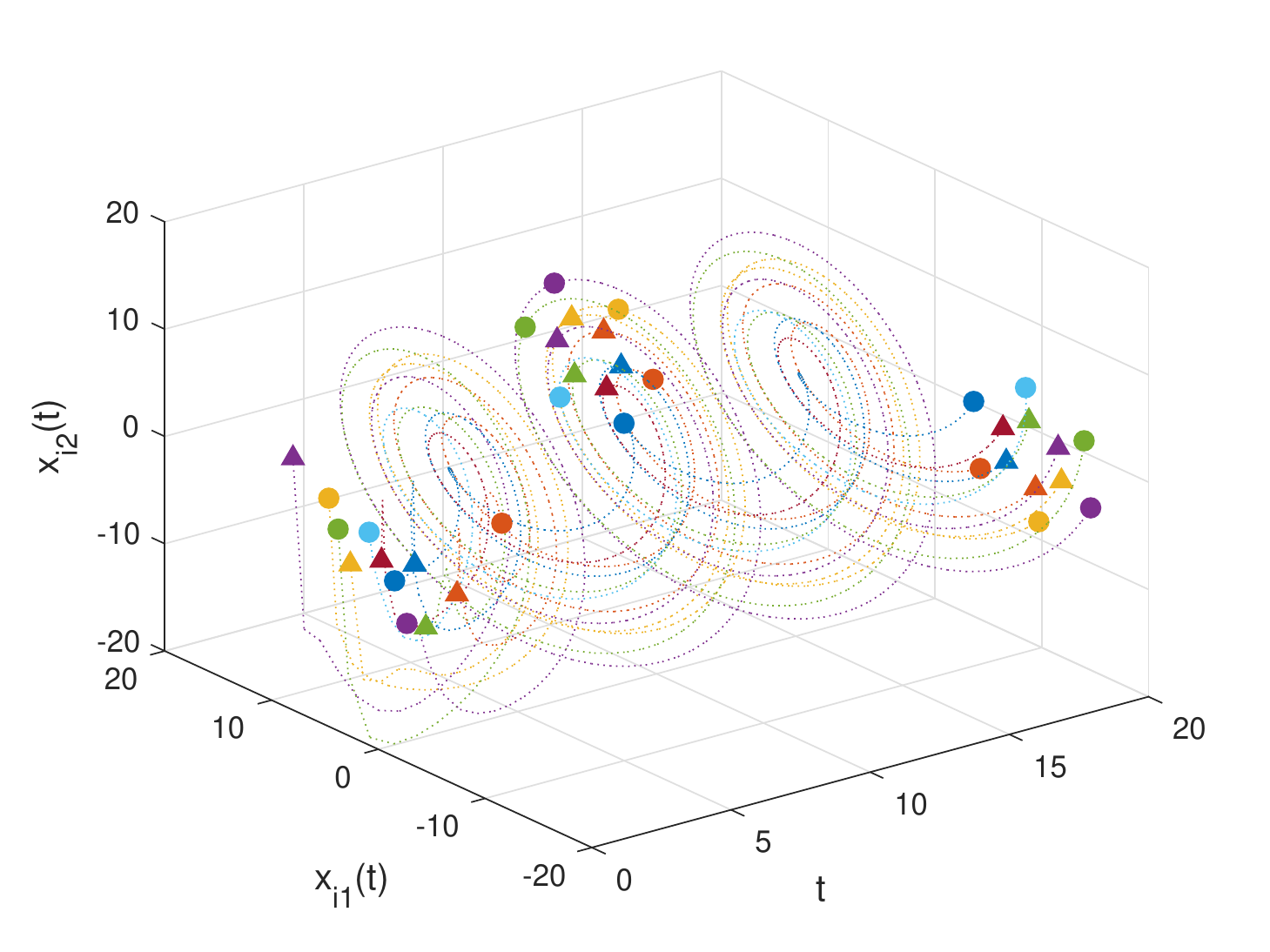}\\
  \caption{Example \ref{tvfex} (TVF): Trajectories of the agents $x_i(t)$, where the circles and triangles are used to mark the agents $i\in\mathcal{I}_6$ and the agents $i\in\mathcal{I}_{12}\setminus\mathcal{I}_6$, respectively, at $t=0$, $10$ and $20$.}\label{e1s1}
 \end{figure}
 \begin{figure}[thb!]
  \centering
  \includegraphics[width=0.50\textwidth]{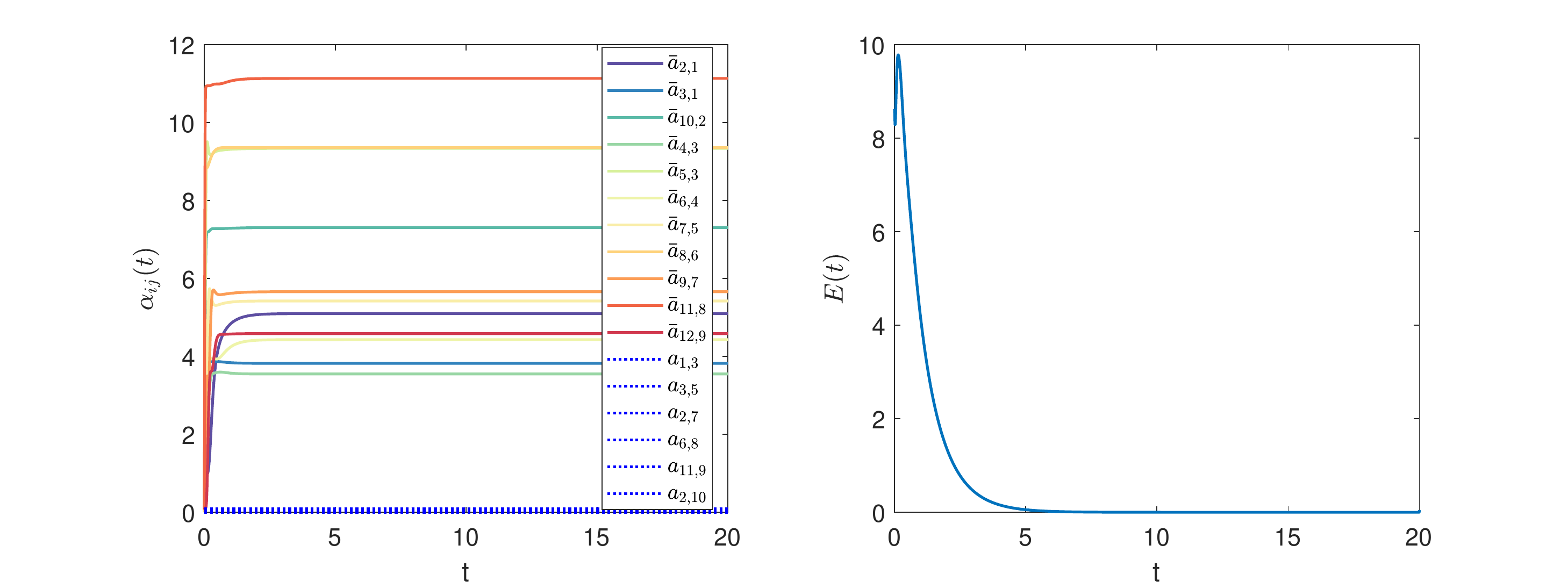}\\
  \caption{Example \ref{tvfex} (TVF): Coupling weights $\alpha_{ij}(t)$ and global formation error $E(t)$ with proposed adaptive method.}\label{e1ae1}
 \end{figure}
  \begin{figure}[thb!]
  \centering
  \includegraphics[width=0.50\textwidth]{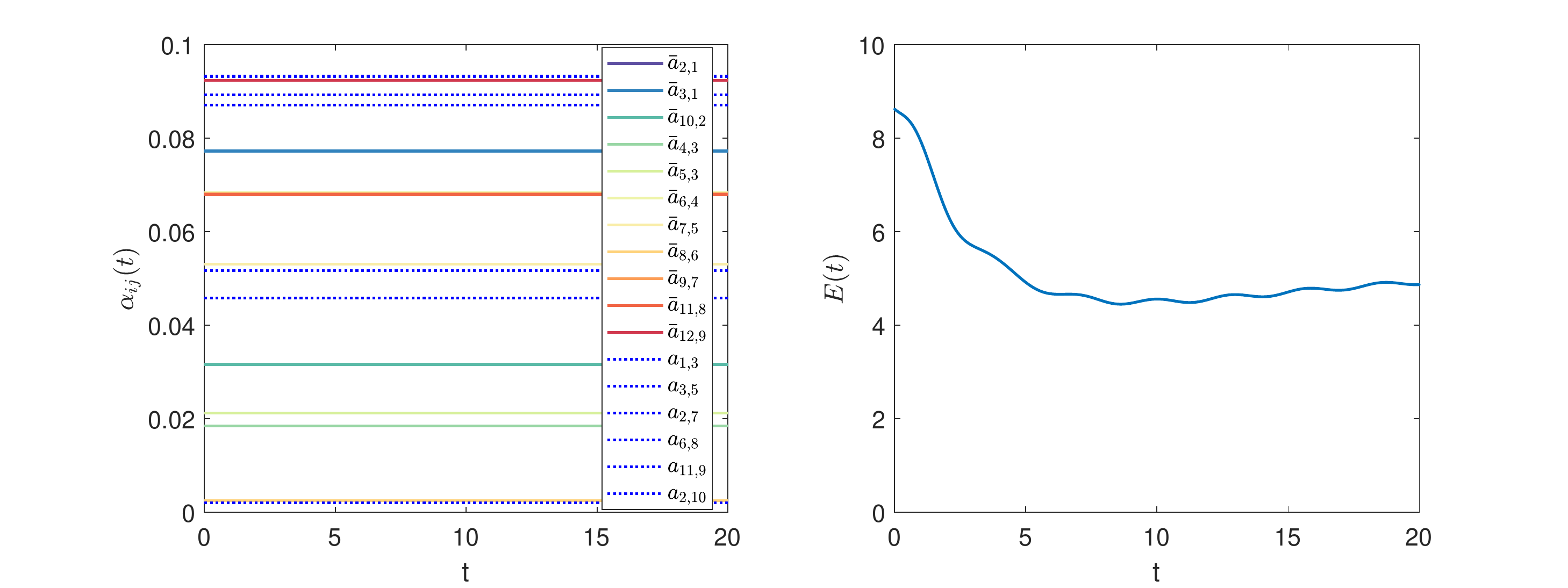}\\
  \caption{Example \ref{tvfex} (TVF): Coupling weights $\alpha_{ij}$ and global formation error $E(t)$ with nonadaptive adaptive method (same initial $\alpha_{ij}$ as in Fig.~\ref{e1ae1}).}\label{e1ae2}
 \end{figure}
   Consider a second-order system modelled by (\ref{agents}) with $N=12$,
$
 A=\left(
      \begin{array}{cc}
        0 & 1 \\
        -1 & 2 \\
      \end{array}
    \right),
  B=\left(
      \begin{array}{c}
        0 \\
        1 \\
      \end{array}
    \right).
$

  The agents interact on the digraph $\mathcal{G}_1$ in Fig.~\ref{tops}. The required TVF is a pair of nested hexagons with $h_i(t)=(6\sin(t+\frac{(i-1)\pi}{3}),6\cos(t+\frac{(i-1)\pi}{3}))^T$ for $i\in\mathcal{I}_6$, and $h_i(t)=(3\sin(t+\frac{(i-1)\pi}{3}),3\cos(t+\frac{(i-1)\pi}{3}))^T$ for $i\in\mathcal{I}_{12}\setminus\mathcal{I}_6$.

  Let $K_0=(0,-2)$. It can be verified via condition (\ref{fescon}) that the desired formation is feasible for the selected DST. Since $A+BK_0$ is stabilizable, we can assign $K_1=(0,0)$. Let $\eta=2$, $\theta=1$, and solve LMI (\ref{mainlmi}) to give a solution
  $P=\left(
     \begin{array}{cc}
       1/3 & -1/3  \\
       -1/3 & 2/3  \\
     \end{array}
   \right)$. Following Algorithm \ref{tvfalg}, one has $K_2=(-3,-3)$, and
 $\Gamma=\left(
     \begin{array}{cc}
       9 & 9  \\
       9 & 9  \\
     \end{array}
   \right)$. Let $\rho_{k+1,i_k}=0.1$.

 The trajectories of the agents are in Fig.~\ref{e1s1}, showing how the nested hexagons are formed and rotate. Let $e_i(t)=d_i(t)-d_{\text{ave}}$ (see Remark \ref{k0k1k2}), $i\in\mathcal{I}_{N}$. The global formation error $E(t)=\sqrt{\frac{1}{N}\sum_{i=1}^{N}\|e_i(t)\|^2}$ converges to zero, as shown in Fig.~\ref{e1ae1}. Fig.~\ref{e1ae1} also shows that the weights $\alpha_{ij}$ are time-varying on the DST (solid lines) and kept constant otherwise (dashed lines). For comparison, Fig.~\ref{e1ae2} shows that if all weights are kept constant \big($\alpha_{ij}=\alpha_{ij}(0)$\big), no TVF may be achieved (global formation error does not converge to zero).
 \end{example}

\begin{example}[TVFT with Three Leaders]\label{tvftmlex}
  \begin{figure}[thb!]
  \centering
  \includegraphics[width=0.40\textwidth]{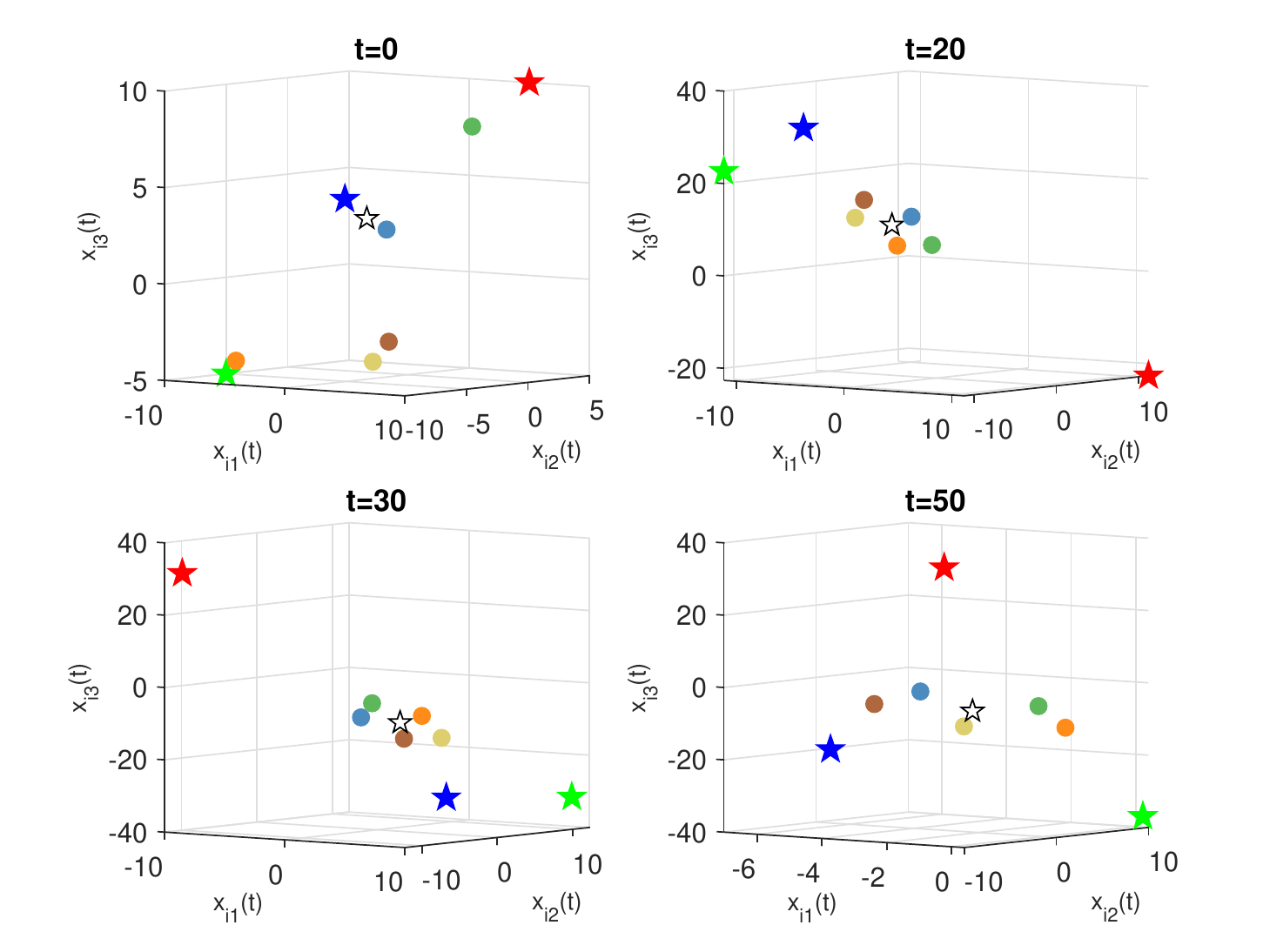}\\
  \caption{Example \ref{tvftmlex} (TVFT with Three Leaders): Snapshots at $t=0$, $20$, $30$, and $50$. Three filled pentagrams, five circles and an unfilled pentagram are used to mark leaders, followers, and the average of the leaders, respectively.}\label{e3s1}
 \end{figure}
   \begin{figure}[thb!]
  \centering
  \includegraphics[width=0.50\textwidth]{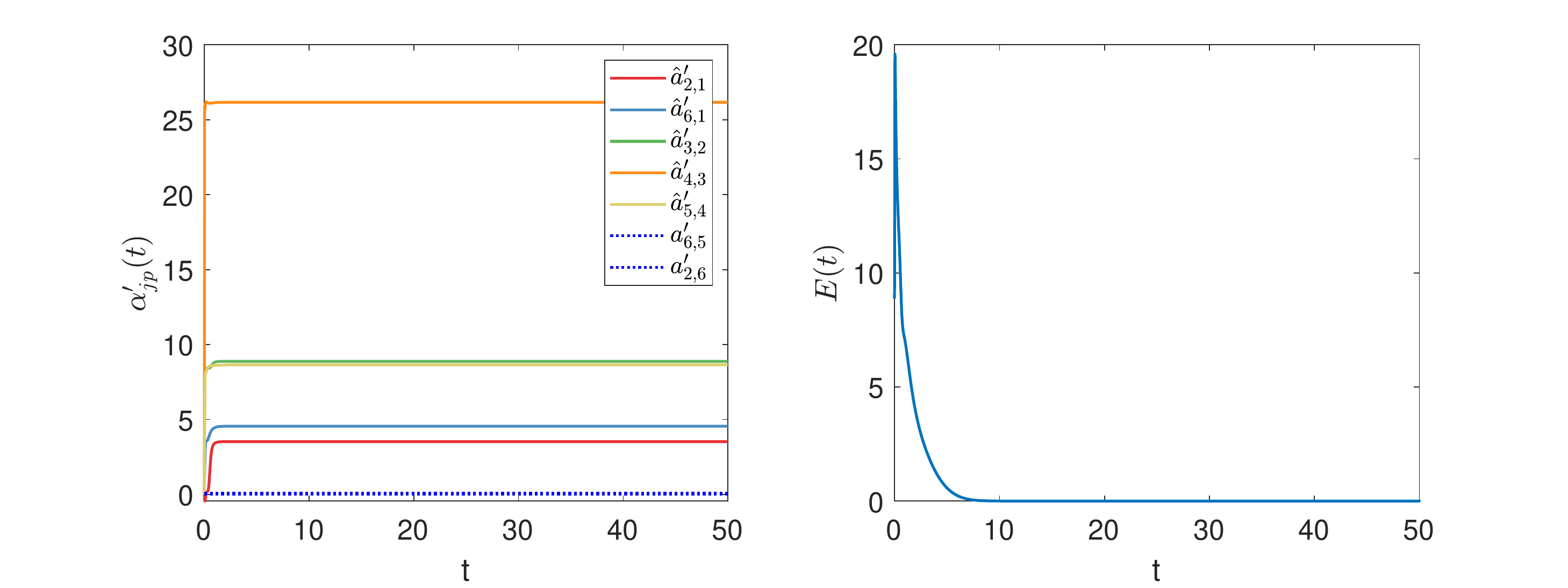}\\
  \caption{Example \ref{tvftmlex} (TVFT with Three Leaders): Coupling weights $\alpha'_{jp}(t)$ in $\mathcal{G}'$, and global formation tracking error $E(t)$ with proposed adaptive method.}\label{e3ae1}
 \end{figure}
    \begin{figure}[thb!]
  \centering
  \includegraphics[width=0.50\textwidth]{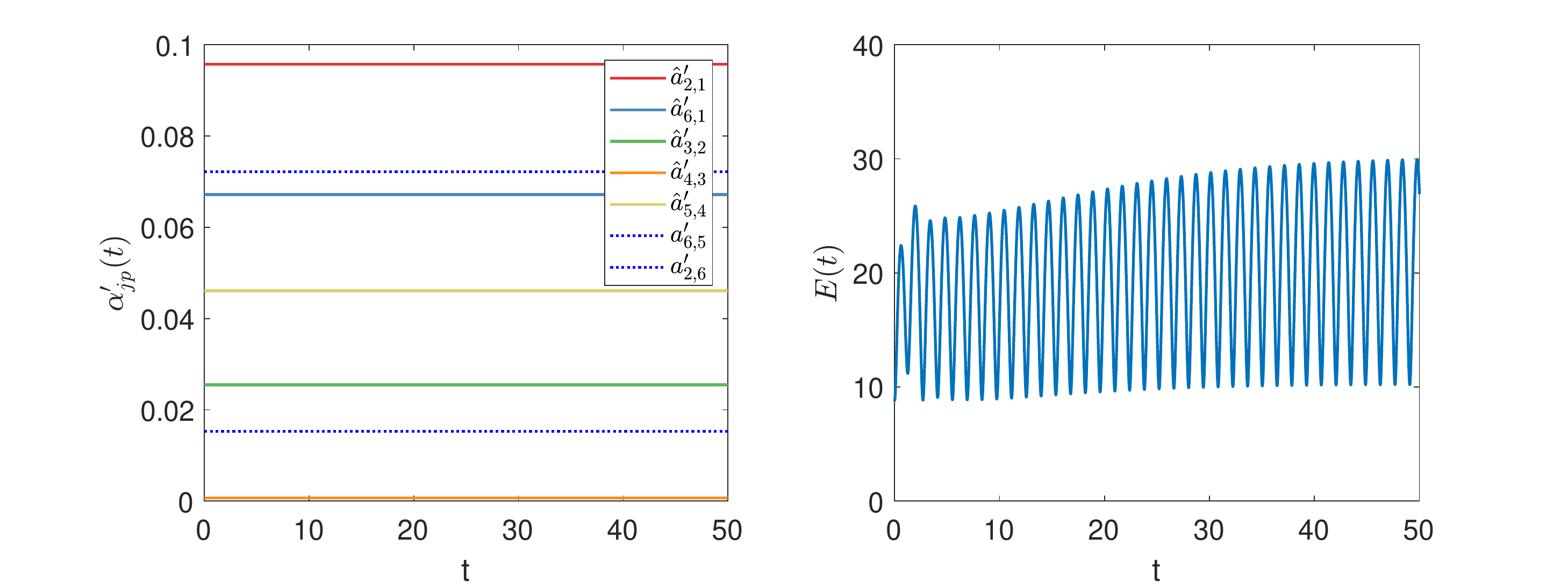}\\
  \caption{Example \ref{tvftmlex} (TVFT with Three Leaders): Coupling weights $\alpha'_{jp}$ in $\mathcal{G}'$, and global formation tracking error $E(t)$ with nonadaptive control (same initial $\alpha'_{jp}$ as in Fig.~\ref{e3ae1}).}\label{e3ae2}
 \end{figure}
  Consider a third-order multi-agent system modelled by (\ref{leaders}) with $N=8$, $M=3$, and
$$
 A=\left(
      \begin{array}{ccc}
        0 & 1 & 1 \\
        1 & 2 & 1 \\
        -2 &-10 & -3 \\
      \end{array}
    \right),
  B=\left(
      \begin{array}{c}
        0 \\
        0 \\
        1 \\
      \end{array}
    \right).
$$

The communication graph is the digraph $\mathcal{G}$ in Fig.~\ref{topex3}. The followers are required to form a time-varying pentagram described by
$$
 h_i(t)=\left(
      \begin{array}{c}
        3\sin(t+\frac{2(i-4)\pi}{5}) \\
        -3\cos(t+\frac{2(i-4)\pi}{5}) \\
        6\cos(t+\frac{2(i-4)\pi}{5}) \\
      \end{array}
    \right), \quad i=4,5\cdots,8,
$$ while tracking the average of the states of the leaders, i.e., $\beta_1=\beta_2=\beta_3=1/3$.

Let $K_0=(0,4,0)$. It can be verified that the defined $h_i(\cdot)$ is feasible. Let $\eta=2$, $\theta=1$, and $\rho_{k+1,j_k}=0.1$. Following Algorithm \ref{tvftalg}, one has $K_2=(-2.3066,-6.8257,-2.4970)$, and
 $\Gamma=\left(
     \begin{array}{ccc}
       5.3206 & 15.7444 & 5.7596 \\
       15.7444 & 46.5895 & 17.0434  \\
         5.7596 & 17.0434  & 6.2349
     \end{array}
   \right)$.

The initial value of the leaders are chosen as $x_1(0)=(5,5,10)^T$, $x_2(0)=(-10,-5,-5)^T$, $x_3(0)=(5,-10,5)^T$. Several snapshots of the agents are in Fig.~\ref{e3s1}, showing that the pentagram emerges and rotates around the average of the three leaders. Similarly, we define the global formation tracking error $E(t)=\sqrt{\frac{1}{N-3}\sum_{i=4}^{N}\|d_i(t)-\sum_{l=1}^3\beta_lx_l(t)\|^2}$. The trajectories of $\alpha'_{ij}$ in $\mathcal{G}'$ (see Fig.~\ref{topex3}) and $E(t)$ are provided in Fig.~\ref{e3ae1}. Once more, a constant coupling strategy fails to accomplish the TVFT task, as shown in Fig.~\ref{e3ae2}.
\end{example}

 \begin{example}[TVFT with a Single Leader]\label{tvftexsl}
  \begin{figure}[thb!]
  \centering
  \includegraphics[width=0.40\textwidth]{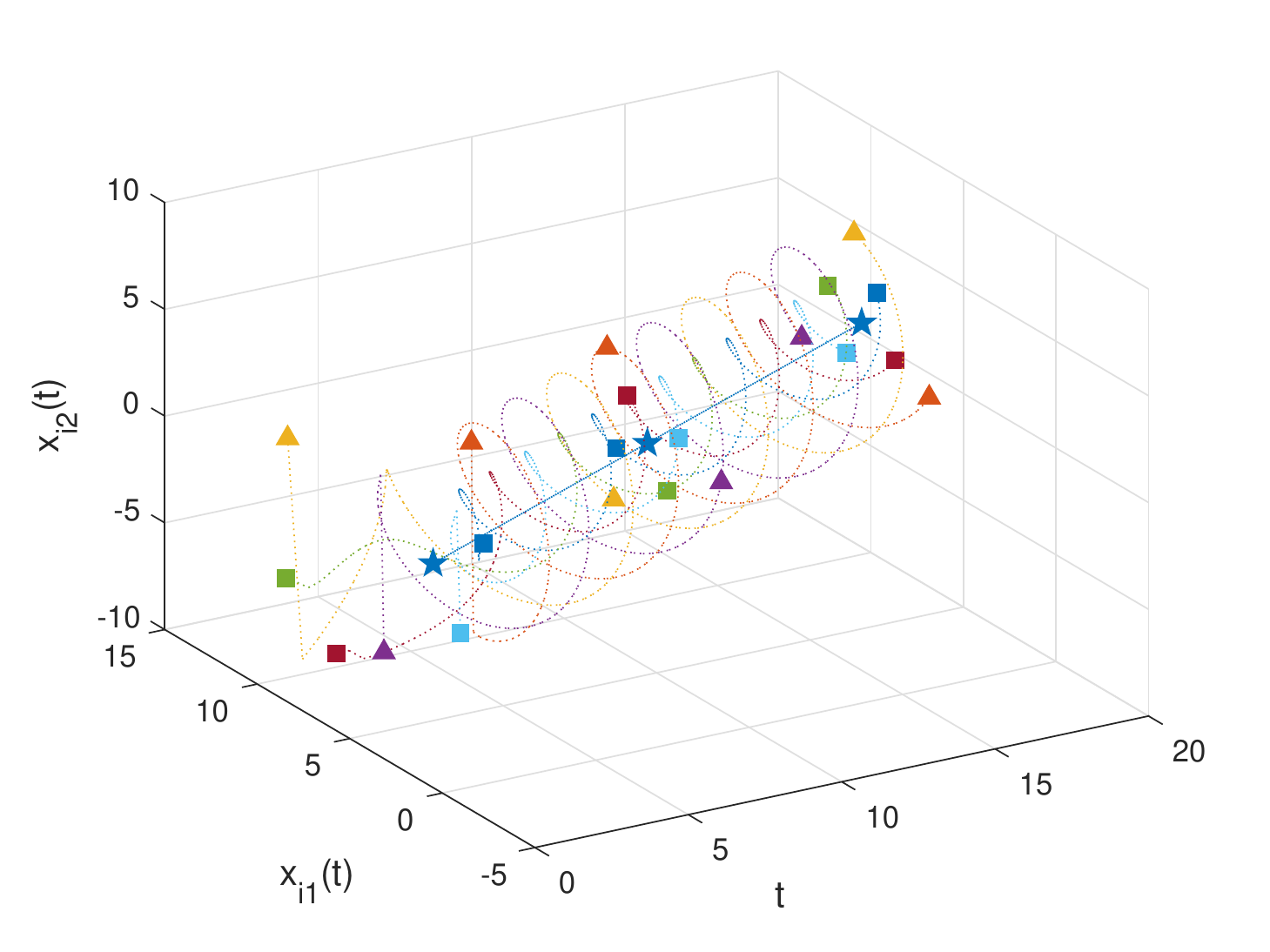}\\
  \caption{Example \ref{tvftexsl} (TVFT with a Single Leader): Trajectories of the agents $x_i(t)$, where three triangles, four squares and a pentagram are used to mark the agents $i\in\{2,3,4\}$, $i\in\{5,6,7,8\}$, and the leader $i=1$, respectively, at $t=0$, $10$ and $20$.}\label{e2s1}
 \end{figure}
  \begin{figure}[thb!]
  \centering
  \includegraphics[width=0.50\textwidth]{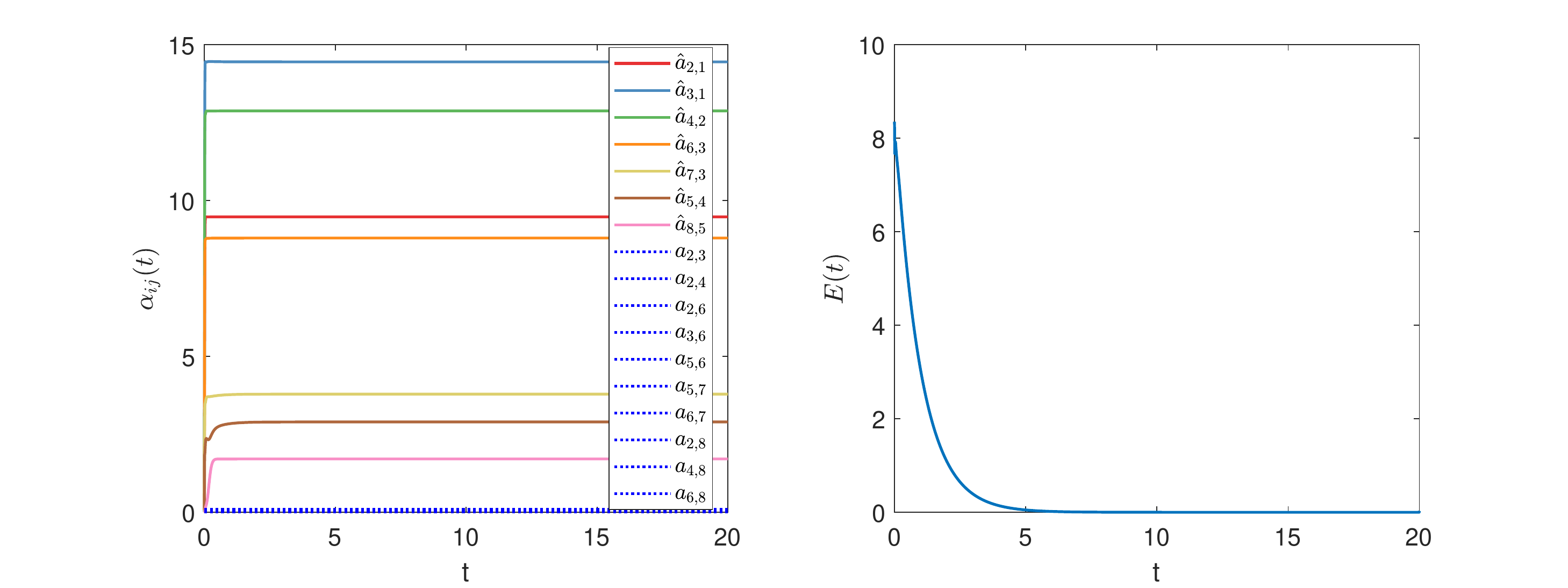}\\
  \caption{Example \ref{tvftexsl} (TVFT with a Single Leader): Coupling weights $\alpha_{ij}(t)$ and global formation tracking error $E(t)$ with proposed adaptive method.}\label{e2ae1}
 \end{figure}
   \begin{figure}[thb!]
  \centering
  \includegraphics[width=0.50\textwidth]{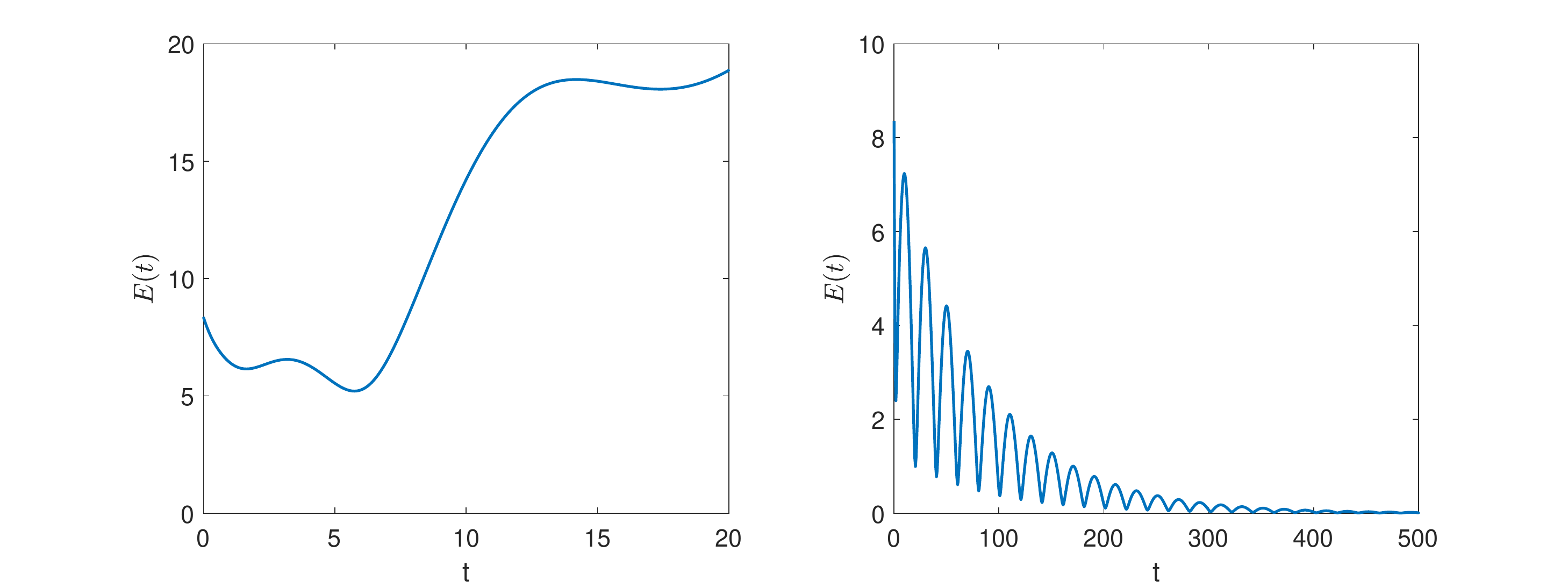}\\
  \caption{Example \ref{tvftexsl} (TVFT with a Single Leader): Global formation tracking error $E(t)$ with nonadaptive control (left) and with adaptive controller (\ref{cdot}) (right).}\label{e2ee}
 \end{figure}
 Consider a network of second-order agents with $N=8$, $M=1$,
$
 A=\left(
      \begin{array}{cc}
        0 & 1 \\
        0 & 0 \\
      \end{array}
    \right),
  B=\left(
      \begin{array}{c}
        0 \\
        1 \\
      \end{array}
    \right),
$ and digraph $\mathcal{G}_2$ in Fig.~\ref{tops}.

The desired formation is an equilateral triangle-like formation around the leader, which is specified by $h_i(t)=(4\sin(t+\frac{2(i-2)\pi}{3}+\pi),4\cos(t+\frac{2(i-2)\pi}{3}+\pi))^T$ for $i\in\{2,3,4\}$, and $h_i(t)=(2\sin(t+\frac{(i-5)\pi}{2}),2\cos(t+\frac{(i-5)\pi}{2}))^T$ for $i\in\{5,6,7,8\}$.

Let $K_0=(-1,0)$. It can be verified via condition (\ref{fescon2}) that the desired formation is feasible. Let $\eta=2$, $\theta=1$, and solve the LMI (\ref{mainlmi2}) to give a solution
$P=\left(
     \begin{array}{cc}
       0.6513 & -0.6513  \\
       -0.6513 & 0.8256  \\
     \end{array}
   \right)$. Following Algorithm \ref{tvftalg}, one has $K_2=(-5.7356,-5.7356)$, and
 $\Gamma=\left(
     \begin{array}{cc}
       32.8969 & 32.8969  \\
       32.8969 & 32.8969  \\
     \end{array}
   \right)$. We choose $\rho_{k+1,i_k}=0.1$.

The initial value of the leader is chosen as $x_1(0)=(0.5,0.5)^T$. The trajectories of the agents are in Fig.~\ref{e2s1}, showing how the triangle emerges and rotates around the leader. If we define the global formation tracking error as $E(t)=\sqrt{\frac{1}{N-1}\sum_{i=2}^{N}\|d_i(t)-x_1(t)\|^2}$, we can see from Fig.~\ref{e2ae1} that it converges to zero (see also the time-varying weights $\alpha_{ij}$ on the DST). Fig.~\ref{e2ee} (left) shows that also in this case the TVFT may not be achieved with nonadaptive control.

The DST framework is not the only possible framework to remove the knowledge of the Laplacian eigenvalues: alternative frameworks have been proposed for consensus \cite{lv2017novel} and group TVFT \cite{hu2020distributed}. Let us include a comparison with the adaptive method used in \cite{lv2017novel,hu2020distributed}, which can be written as:
  \begin{align}\label{cdot}
   &u_i=K_0h_i+K_2\sum_{j\in\mathcal{N}_{1}(i)}(c_{i}(t)+\xi_iP^{-1}\xi_i)(d_i-d_j)   \nonumber\\
   &\dot{c}_i=\xi_i^T\Gamma\xi_i \qquad\xi_i=\sum_{j\in\mathcal{N}_{1}(i)}a_{ij}(d_i-d_j).
 \end{align}
Note that in (\ref{cdot}) all coupling weights in the network are made adaptive. We select the same initial conditions, and $c_i(0)=10$; the global formation tracking error is shown in Fig.~\ref{e2ee} (right). As compared to Fig.~\ref{e2ae1} (right), it is interesting to note that adapting the gains on a DST instead of on the entire network leads to faster convergence of the formation errors.
\end{example}

\section{Conclusions}\label{consec}
A directed spanning tree (DST) adaptive framework has been developed for time-varying formation and formation tracking of linear multi-agent systems. The proposed framework provides a natural generalization of the DST based adaptive method in the presence of one or more leaders: necessary and sufficient conditions for solving the proposed framework have been derived. Future topics may include generalizing the proposed DST framework in the sense of cluster formation, partial state information, nonlinear agents and nonzero inputs of the leaders.

\appendix[Proof of Lemma \ref{lmaQ}]
Inspired by \cite{yu2015distributed} and \cite{yu2018distributed}, an auxiliary matrix $J$ is introduced to analyze Lemma \ref{lmaQ}. Define $J\in\mathbb{R}^{N\times(N-1)}$ as
\begin{align}\label{J}
    J_{ik}=\left\{
              \begin{array}{ll}
                0, & \text{if}\quad i\in\bar{\mathcal{V}}_{k+1},  \\
                1, & \text{otherwise}
              \end{array}
            \right.\nonumber
 \end{align}
 where $\bar{\mathcal{V}}_{k+1}$ represents the vertex set of the subtree of $\bar{\mathcal{G}}$ rooting at node $k+1$. The proof will proceed along three steps:
 \begin{enumerate}
   \item\label{L} Proving that $\mathcal{L}=\mathcal{L}J\Xi$;
   \item\label{Q} Proving that $Q=\Xi\mathcal{L}J$;
   \item\label{XiL} Proving (\ref{xlqx}) and (\ref{qbar}), i.e., the statements of the lemma.
 \end{enumerate}

 Step \ref{L}) Let us denote $X=J\Xi$. Then, $X_{ij}=\sum_{k=1}^{N-1}J_{ik}\Xi_{kj}$, $i,j\in\mathcal{I}_N$. We classify the discussions according to the value of $j$ in order to clarify the matrix $X$.

\emph{Case 1: $j=1$.} Then, $X_{i1}=\sum_{k=1,i_k=1}^{N-1}J_{ik}$.

Since $J_{1k}=1$, $\forall k$, then $X_{11}=\bar{\mathcal{D}}_2(1)$, which is the out-degree of the root in $\bar{\mathcal{G}}$; When $i>1$, there exists a unique $\bar{k}\in\mathcal{I}_{N-1}$ satisfying $i_{\bar{k}}=1$, such that $i\in\bar{\mathcal{V}}_{\bar{k}+1}$, implying that $J_{i\bar{k}}=0$. Thus, $X_{i1}=\bar{\mathcal{D}}_2(1)-1$.

To sum up, $X_{i1}=\left\{
              \begin{array}{ll}
                \bar{\mathcal{D}}_2(1), & i=1,  \\
                \bar{\mathcal{D}}_2(1)-1, & i>1.
              \end{array}
            \right.$

\emph{Case 2: $j$ is a stem.} Then, $X_{ij}=\sum_{k=1,i_k=j}^{N-1}J_{ik}-J_{i,j-1}$.
\begin{enumerate}[i.]
  \item When $i\notin\bar{\mathcal{V}}_j$, $X_{ij}=\sum_{k=1,i_k=j}^{N-1}J_{ik}-1$. Then, $\forall k$ satisfying $i_k=j$, $i\notin\bar{\mathcal{V}}_{k+1}$. Thus, $X_{ij}=\bar{\mathcal{D}}_2(j)-1$.
  \item When $i\in\bar{\mathcal{V}}_j$, $X_{ij}=\sum_{k=1,i_k=j}^{N-1}J_{ik}$. If $i=j$, then $\forall k$ satisfying $i_k=j$, $J_{ik}=1$. Thus $X_{jj}=\bar{\mathcal{D}}_2(j)$. If $i\neq j$, there exists a unique $\bar{k}$ satisfying $i_{\bar{k}}=j$, such that $i\in\bar{\mathcal{V}}_{\bar{k}+1}$, implying that $J_{i\bar{k}}=0$. Then, $X_{ij}=\bar{\mathcal{D}}_2(j)-1$.
\end{enumerate}

To sum up, $X_{ij}=\left\{
              \begin{array}{ll}
                \bar{\mathcal{D}}_2(j), & i=j,  \\
                \bar{\mathcal{D}}_2(j)-1, & i\neq j
              \end{array}
            \right.$ when $j$ is a stem.

\emph{Case 3: $j$ is a leaf.} Then, $X_{ij}=-J_{i,j-1}$.

In this case, $\bar{\mathcal{V}}_j=\{j\}$, meaning that $J_{i,j-1}=0$ if and only if $i=j$. Then $X_{ij}=\left\{
              \begin{array}{ll}
                0, & i=j,  \\
                -1, & i\neq j.
              \end{array}
            \right.$

Summarizing all three cases, the matrix $X$ can be written in a unified way as
$X_{ij}=\left\{
              \begin{array}{ll}
                \bar{\mathcal{D}}_2(j), & i=j,  \\
                \bar{\mathcal{D}}_2(j)-1, & i\neq j.
              \end{array}
            \right.$ Then,
\begin{align}
  (\mathcal{L}X)_{ij}&=\sum_{k=1}^N\mathcal{L}_{ik}X_{kj}\nonumber\\
  &=\sum_{k\neq j}\mathcal{L}_{ik}(\bar{\mathcal{D}}_2(j)-1)+\mathcal{L}_{ij}\bar{\mathcal{D}}_2(j)\nonumber\\
  &=(\bar{\mathcal{D}}_2(j)-1)\sum_{k=1}^N\mathcal{L}_{ik}+\mathcal{L}_{ij}
  =\mathcal{L}_{ij}.\nonumber
\end{align}
So, $\mathcal{L}=\mathcal{L}J\Xi$ is proved.

Step \ref{Q}) Let us denote $Y=\Xi\mathcal{L}J$. Then,
\begin{align}
  Y_{kj}&=\sum_{i=1}^N(\Xi\mathcal{L})_{ki}J_{ij}=\sum_{i=1}^N(\sum_{s=1}^N\Xi_{ks}\mathcal{L}_{si})J_{ij}\nonumber\\
  &=\sum_{s=1}^N\Xi_{ks}\sum_{i=1}^N\mathcal{L}_{si}J_{ij}=\sum_{i=1}^N\mathcal{L}_{i_k,i}J_{ij}-\sum_{i=1}^N\mathcal{L}_{k+1,i}J_{ij}\nonumber\\
  &=\sum_{i=1,i\notin\bar{\mathcal{V}}_{k+1}}^N(\mathcal{L}_{i_k,i}-\mathcal{L}_{k+1,i})\nonumber
\end{align}
where the definitions of $\Xi$ and $J$ are used to get the last two equalities, respectively. Since $\mathcal{L}$ has zero row sums, we have
\begin{align}
  Y_{kj}&=\sum_{c\in\bar{\mathcal{V}}_{k+1}}(\mathcal{L}_{k+1,c}-\mathcal{L}_{i_k,c})\nonumber\\
  &=\sum_{c\in\bar{\mathcal{V}}_{j+1}}(\tilde{\mathcal{L}}_{k+1,c}-\tilde{\mathcal{L}}_{i_k,c})+\sum_{c\in\bar{\mathcal{V}}_{j+1}}(\bar{\mathcal{L}}_{k+1,c}-\bar{\mathcal{L}}_{i_k,c})\nonumber\\
  &=\tilde{Q}_{kj}+\bar{Q}_{kj}=Q_{kj}.\nonumber
\end{align}
Then, $Q=\Xi\mathcal{L}J$ is proved.

Step \ref{XiL}) Let both sides $Q=\Xi\mathcal{L}J$ multiply $\Xi$, one has $Q\Xi=\Xi\mathcal{L}J\Xi=\Xi\mathcal{L}$, then (\ref{xlqx}) holds. To prove the explicit form of $\bar{Q}$ in (\ref{qbar}), one can can distinguish three cases based on the relationships between the edge $\bar{e}_{i_k,k+1}$ and the subtree $\bar{\mathcal{V}}_{j+1}$:

\emph{Case 1: $k+1\notin\bar{\mathcal{V}}_{j+1}$}. Then, it is obvious that $\bar{Q}_{kj}=0$.

\emph{Case 2: $k+1\in\bar{\mathcal{V}}_{j+1}$} and \emph{$i_k\notin\bar{\mathcal{V}}_{j+1}$}. In this case, the only possible value of $k$ is $k=j$. Then,
\begin{align}
  \bar{Q}_{kj}&=\sum_{c\in\bar{\mathcal{V}}_{j+1}}(\bar{\mathcal{L}}_{k+1,c}-\bar{\mathcal{L}}_{i_k,c})\nonumber\\
  &=\bar{\mathcal{L}}_{k+1,k+1}=\bar{\mathcal{L}}_{j+1,j+1}=\bar{a}_{j+1,i_j}. \nonumber
\end{align}

\emph{Case 3: $i_k\in\bar{\mathcal{V}}_{j+1}$}. Then,
\begin{enumerate}[i.]
  \item When $i_k=j+1$,
  \begin{align}
  \bar{Q}_{kj}&=\sum_{c\in\bar{\mathcal{V}}_{j+1}}(\bar{\mathcal{L}}_{k+1,c}-\bar{\mathcal{L}}_{i_k,c})\nonumber\\
  &=\bar{\mathcal{L}}_{k+1,i_k}-\bar{\mathcal{L}}_{i_k,i_k}+\bar{\mathcal{L}}_{k+1,k+1}-\bar{\mathcal{L}}_{i_k,k+1} \nonumber\\
  &=-\bar{\mathcal{L}}_{i_k,i_k}=-\bar{a}_{j+1,i_j}.\nonumber
\end{align}
  \item When $i_k>j+1$,
    \begin{align}
  \bar{Q}_{kj}&=\sum_{c\in\bar{\mathcal{V}}_{j+1}}(\bar{\mathcal{L}}_{k+1,c}-\bar{\mathcal{L}}_{i_k,c})\nonumber\\
  &=\bar{\mathcal{L}}_{k+1,i_{i_k-1}}-\bar{\mathcal{L}}_{i_k,i_{i_k-1}}+\bar{\mathcal{L}}_{k+1,i_k}-\bar{\mathcal{L}}_{i_k,i_k}\nonumber\\
  &\qquad+\bar{\mathcal{L}}_{k+1,k+1}-\bar{\mathcal{L}}_{i_k,k+1} \nonumber\\
  &=-\bar{\mathcal{L}}_{i_k,i_{i_k-1}}+\bar{\mathcal{L}}_{k+1,i_k}-\bar{\mathcal{L}}_{i_k,i_k}+\bar{\mathcal{L}}_{k+1,k+1}
  =0.\nonumber
\end{align}
\end{enumerate}

Summarizing all three cases, the matrix $\bar{Q}$ can also be given in a unified way as
$\bar{Q}_{kj}=\left\{
                   \begin{array}{ll}
                     \bar{a}_{j+1,i_j}, & \text{if}\quad  j=k,  \\
                     -\bar{a}_{j+1,i_j}, & \text{if}\quad  j=i_k-1,  \\ 
                     0, & \text{otherwise}.
                   \end{array}
                 \right.    \nonumber
$ Then (\ref{qbar}) is proved, which completes the proof.




%



\ifCLASSOPTIONcaptionsoff
  \newpage
\fi



\bibliographystyle{IEEEtran}
\bibliography{IEEEabrv,sample}
\end{document}